\DeclareMathOperator{\tr}{Tr}
\DeclareMathOperator{\rank}{rank}
\DeclareMathOperator{\diag}{diag}
\newtheorem{lem}{Lemma}
\newtheorem{prop}{Proposition}
\newtheorem{theorem}{Theorem}
\newtheorem{alg}{Algorithm}
\newtheorem{rem}{Remark}
\title{\LARGE \bf Input Design for Model Discrimination and Fault Detection \\ via Convex Relaxation}
\author{Seunggyun Cheong$^{1}$ and Ian R. Manchester$^{1}$
\thanks{This work was supported by the Australian Research Council.}
\thanks{$^{1}$The authors are with Australian Centre for Field Robotics (ACFR) and School of the Aerospace, Mechanical and Mechatronic
Engineering, The University of Sydney, NSW 2006, Australia
        {\tt\small \{s.cheong,i.manchester\}@acfr.usyd.edu.au}}%
}
\begin{document}

\maketitle
\thispagestyle{empty}
\pagestyle{empty}

\begin{abstract}
This paper addresses the design  of input signals for the purpose of discriminating among a finite set of models dynamic systems within a given finite time interval. A motivating application is fault detection and isolation. We propose several specific optimization problems, with objectives or constraints based on signal power, signal amplitude, and probability of successful model discrimination. Since these optimization problems are nonconvex, we suggest a suboptimal solution via a random search algorithm guided by the semidefinite relaxation (SDR) and analyze the accuracy of the suboptimal solution. We conclude with a simple example taken from a benchmark problem on fault detection for wind turbines.
\end{abstract}

\section{INTRODUCTION}

In many applications of control and automation there will occur events that necessitate re-identifying a system model or detecting that the dynamics have changed. For example, it may be that the system dynamics have slowly changed due to aging, or abruptly changed to a fault. It is desirable to adjust the current control law accordingly or fix the fault. In this paper we investigate design of ``probing signals'' that improve the reliability of such a process. In particular, we consider the problem of discriminating among a fixed finite set of models, and finding the one which best matches the current system behaviour.

There is a long history of research into input design for system identification. The majority of papers have considered continuous parameterizations of models, with identification quality measured by estimated parameter variances or the Fisher information matrix. Basic approaches are summarised in \cite{Goodwin:1977}, \cite{Ljung:1999}, including pseudo-random binary signals and optimized multi-sine signals. In  \cite{Jansson:2005} semidefinite programming was used to design optimal signals in the frequency domain subject power constraints. Robust procedures were proposed in \cite{Rojas:2007}. In \cite{Manchester:2010}, \cite{Manchester:2012}, time domain signals were designed subject to power and amplitude constraints using semidefinite relaxation. 

Discriminating among a finite set of models is obviously more limiting in some ways than a continuous parameterization, but it also offers certain advantages, e.g. one can easily have different models of different order and structure. In the case of fault detection and isolation, there are frequently  a finite number modes of operation, corresponding to failures of different components, each of which is well understood: see, e.g., \cite{Hwang:2010} and references therein.

The majority of papers on fault detection assume the input signal is ``given'', and cannot be adjusted, and focus on the statistical estimation \cite{Hwang:2010}. In this paper we consider the case when the input can be adjusted to some small degree. The input design problem then could be posed in two general formats: maximize model discrimination probability subject to signal constraints related to nominal system operation, or minimize probing signal magnitude (in some sense) subject to constraints on reliability of model discrimination. These are related to the ``traditional'' and ``least costly'' input design, respectively, for continuously parameterized  model sets \cite{Rojas:2008}.

In \cite{Kerestecioglu:1994}, a frequency-domain approach to input design problem was proposed for model discrimination in terms of cumulative sum and probability ratio tests. In \cite{Campbell:2002} it was assumed that initial conditions of the system and disturbance signals are bounded by a known value, and the objective is an input signal with the least power such that it is impossible for models to have the same output signal. Although this method brings an absolute discrimination, the corresponding optimization problem is quite demanding. In contrast,  \cite{Skanda:2010} proposes to maximizing the Kullback-Leibler (K-L) divergence \cite{Kullback:1951} of the probability density functions (PDFs), corresponding to the output signals of models, from each other, assuming that initial conditions of the system can be chosen and a measurement noise signal is an iid random process with a normal distribution.


In the time domain, input-design problems have the structure of a nonconvex quadratic program. Recently, semidefinite relaxation techniques have been applied successfully to such problems in a wide variety of application areas \cite{Luo:2007} and in some cases can be proven to be very accurate, e.g. \cite{Nesterov:1998}. In this paper, we utilize and extend some of these methods for the problem of model discrimination.

The main contributions of this paper are: Section \ref{sec:ModelDiscrimination}: a model-selecting criterion, based on a modified version of the prediction error method (PEM) \cite{Ljung:1999} which admits rigorous analysis in terms of hypothesis testing; Section \ref{sec:InputDesign}: a family of optimization problems for input design, subject to different discrimination criteria and signal constraints; Section \ref{sec:Computation}: an approximate solution method for these (nonconvex) optimization problems using semidefinite relaxation; Section \ref{sec:Rank}: some cases when the proposed method is optimal, and  \ref{sec:quality}: an analysis of the quality of the solutions when the method is sub-optimal. We conclude by presenting a simple example based on a benchmark problem in fault detection for a wind turbine. All the proofs of the main results are provided in APPENDIX.

We use the following notation conventions: $|\cdot|$ denotes Euclidean norm of a finite-dimensional vector, $|\cdot|_\infty$ is the $\infty$-norm of a vector, $\|\cdot\|$ is the induced norm of a matrix with respect to $|\cdot |$. The set of symmetric positive semidefinite $n\times n$ matrices is denoted by $S_+^n$, the operator $\diag(\cdot)$ selects the diagonal elements of a square matrix, the expectation operator is denoted by $E[\cdot]$, and  $\chi_{1-\alpha,d}^2$ is the critical value of the chi-squared distribution with $d$ degrees of freedom and significance level $\alpha$.

\section{MODEL DISCRIMINATION}\label{sec:ModelDiscrimination}

We consider an uncertain system of the form
\begin{equation}\label{eq:sys}
y(t)=\mathcal{G}_0u(t)+\mathcal{H}_0s(t)+\Xi_0 x_0
\end{equation}
with single-input single-output (SISO) operators $\mathcal{G}_0$  and $\mathcal{H}_0$, and where $u$ is an input signal that we apply to the system, $y$ is an observed output signal, and $s$ is an unobserved signal that produces the additive disturbance, and $x_0$ is an initial condition (if present). We suppose that, based on a priori knowledge of the system in \eqref{eq:sys}, there is available a finite number $N$ of causal, linear time-invariant (LTI), discrete-time, and SISO models
\begin{equation}\label{eq:Models}
y_n(t)=\mathcal{G}_nu_n(t)+\mathcal{H}_ns_n(t)+\Xi_n x_n,\ t=0,1,\cdots\\
\end{equation}
for $n=1,\cdots,N$ each of which satisfies the following:
\begin{enumerate}
\item The initial conditions of each model $\mathcal{G}_n$  is described by a finite-dimensional vector $x_n=\bar x_n+Q_nv_n$ where $\bar x_n$ and $Q_n$ are predetermined vector and matrix, respectively, and $v_n$ is a Gaussian random vector that has a zero mean and a covariance matrix $\sigma_n^2I$ for some constant $\sigma_n$.
\item The disturbance dynamics $\mathcal{H}_n$ is invertible and monic and has a zero initial condition at time $t=0$.
\item The signal $s_n(t), t=0,1,\cdots$ is an iid random process and is also independent of $x_n$. Each $s_n(t)$ has a normal distribution, denoted by $f_{\sigma_n}$, with a zero mean value and some variance $\sigma_n^2$.
\end{enumerate}
Note that the appearance of $\sigma_n$ in the first and third assumption is just for notational simplicity, and does not involve any loss of generality because of the flexibility in $Q_n$.

Naturally, we assume that the models are distinct.  Furthermore, the models may have different orders, which means that the dimensions of $x_n$'s may be different. For simplicity, the elements of $v_n$ and $s_n(t)$ have the same variance $\sigma_n^2$, which works as a fitting parameter later.

Let $T$ be a positive integer and represent the length of an experiment. Then, using \eqref{eq:Models}, we can describe the output signal of the $n$-th model in ``lifted form'': $\mathbf{y}_n\triangleq\begin{bmatrix}y_n(0) &\cdots &y_n(T-1)\end{bmatrix}'$ by
\begin{equation}\label{eq:yn}
\begin{split}
\mathbf{y}_n&=G_n\mathbf{u}_n+\Psi_nx_n+H_n\mathbf{s}_n\\
&=G_n\mathbf{u}_n+\Psi_n\bar x_n+\Psi_nQ_nv_n+H_n\mathbf{s}_n
\end{split}
\end{equation}
with $\mathbf{u}_n=\begin{bmatrix}u_n(0) &\cdots &u_n(T-1)\end{bmatrix}'$ and $\mathbf{s}_n=\begin{bmatrix}s_n(0) &\cdots &s_n(T-1)\end{bmatrix}'$ where the matrices $G_n$, $\Psi_n$, and $H_n$ represent the $n$-th model. For example, if $(A_n, B_n, C_n, D_n)$ is a state-space representation of the $n$-th model, then we have
\begin{equation}\nonumber
G_n=\begin{bmatrix}g_{n,0} &0 &\cdots &0\\ g_{n,1} &g_{n,0} &\cdots &0\\ \vdots &\vdots &\ddots &\vdots\\ g_{n,T-1} &g_{n,T-2} &\cdots &g_{i,0}\end{bmatrix}, \Psi_n=\begin{bmatrix}C_n\\ C_nA_n\\ \vdots\\ C_nA_n^{T-1}\end{bmatrix}
\end{equation}
with $g_{n,0}=D_n$ and $g_{n,i}=C_nA_n^{i-1}B_n$ for $i=1,2,\cdots$. The matrices $H_n$'s are defined similarly to $G_n$'s. Note that the matrices $G_n$ and $H_n$ are lower triangular due to the causality of the models and, in particular, the  $H_n$'s are invertible.

For the purposes of model discrimination, given input-output data of the real system: $\mathbf{u}=\begin{bmatrix}u(0) &\cdots &u(T-1)\end{bmatrix}'$ and $\mathbf{y}=\begin{bmatrix}y(0) &\cdots &y(T-1)\end{bmatrix}'$,  we construct for each model a vector $\tilde v_n$ and a signal $\tilde s_n(t)$ like so:
\begin{equation}\label{eq:fic}
\tilde p_n\triangleq\begin{bmatrix}\tilde v_n\\ \mathbf{\tilde s}_n\end{bmatrix}=\begin{bmatrix}\Psi_nQ_n &H_n\end{bmatrix}^+\left(\mathbf{y}-G_n\mathbf{u}-\Psi_n\bar x_n\right)
\end{equation}
where $\mathbf{\tilde s}_n\triangleq\begin{bmatrix}\tilde s_n(0) &\cdots &\tilde s_n(T-1)\end{bmatrix}'$ and $+$ means the Moore-Penrose pseudo inverse. These signals represent the initial condition and noise signals that would have been necessary for system $n$ to have produced the observed input-output data set, and parallels the use of ``fictitious'' reference signals in unfalsified adaptive control (See, for example, \cite{Safonov:1997} and \cite{Cheong:2010} for details). 

If the elements of $\tilde p_n$ in \eqref{eq:fic} are plausible realisations of the random process associated with the $n$-th model, then we may conclude that the $n$-th model successfully describes the input-output data $u(t)$ and $y(t)$. Note that if the initial condition $x_n$ is a deterministic vector, i.e. $Q_n=\mathbf{0}$, we remove $\tilde v_n$ and $\Psi_nQ_n$ in \eqref{eq:fic} and the signal $\tilde s_n(t)$ reduces to the one-step-ahead prediction error \cite{Ljung:1999} corresponding to the $n$-th model. The presence of initial conditions for our criterion is important in fault detection because we expect to be examining the system during its normal operation, and the statistics of the initial conditions may come from, e.g., a bank of Kalman filters \cite{Hwang:2010}.

Based on the signals $\tilde p_n$, we can use the maximum likelihood method to select a particular model. Denote the dimension of $\tilde p_n$ by $T_n$ so that we have $T_n=T+\dim v_n$. Also, denote by $\mathbf{f}_\sigma$ the PDF of a $T_n$-dimensional multivariate normal distribution with a zero mean vector and a covariance matrix $\sigma^2I$. Given $\tilde p_n=\begin{bmatrix}\tilde v_n{}' &\mathbf{\tilde s}_n{}'\end{bmatrix}'$ in \eqref{eq:fic}, an estimate
\begin{equation}\label{eq:SigmaHatn}
\begin{split}
\hat\sigma_n&=\arg\max_{\sigma\ge 0}\ln\mathbf{f}_\sigma(\tilde p_n)\\
&=\arg\min_{\sigma\ge 0}T_n\ln\sigma^2+\frac{1}{\sigma^2}\tilde p_n{}'\tilde p_n\\
&=\left(\frac{1}{T_n}\tilde p_n{}'\tilde p_n\right)^\frac{1}{2}=\left\{\frac{1}{T_n}\left(\tilde v_n{}'\tilde v_n+\mathbf{\tilde s}_n{}'\mathbf{\tilde s}_n\right)\right\}^\frac{1}{2}.
\end{split}
\end{equation}
provides the maximum likelihood (over $\sigma$) for the $n$-th model, and then we select a model that has the least value for $\hat\sigma_n^2$, i.e.
\begin{equation}\label{eq:nhat}
\begin{split}
\hat n&=\arg\min_{n\in\{1,\cdots,N\}}\hat\sigma_n^2\\
&=\arg\min_{n\in\{1,\cdots,N\}}\frac{1}{T_n}\left(\tilde v_n{}'\tilde v_n+\mathbf{\tilde s}_n{}'\mathbf{\tilde s}_n\right).
\end{split}
\end{equation}

The selection criterion in \eqref{eq:nhat} provides a definitive selection and only one model stands after this method is applied to collected data.
On the other hand, sometimes we may want to consider all the models that do not show some evident inadequacy. We formulate this as a hypothesis testing for each model with
\begin{quote}
Null Hypothesis : The $n$-th model produced the data with some $\sigma_n$ less than or equal to a known value $\bar\sigma\ge 0$.
\end{quote}
Unless this null hypothesis is rejected based on the collected data, we keep the $n$-th model as a potential origin of the data. Thus, possibly multiple models remain as candidates after this testing.

For the $n$-th model, we may reject the null hypothesis if the data shows a significant evidence that the estimated variance $\hat\sigma_n^2$ in \eqref{eq:SigmaHatn} is greater than $\bar\sigma^2$. Thus, using a chi-squared test, we reject the null hypothesis when a statistic $\frac{T_n\hat\sigma_n^2}{\bar\sigma^2}$ is greater than $\chi_{1-\alpha,T_n-1}^2$. Therefore, the candidate models based on the collected data are described by a set
\begin{equation}\label{eq:nset}
\left\{n\in\{1,\cdots,N\}\left|\hat\sigma_n^2\le\frac{\chi_{1-\alpha,T_n-1}^2}{T_n}\bar\sigma^2\right.\right\}.
\end{equation}

\section{INPUT DESIGN FOR MODEL DISCRIMINATION}\label{sec:InputDesign}

If the real system in \eqref{eq:sys} is well-described by one of the $N$ models, it is important to be able to distinguish it from other models. In this section, we consider design of a probing input signal $u$ that ensures the model selection procedure in Section \ref{sec:ModelDiscrimination} is successful.

Suppose that we wish to discriminate between models $n_1$ and $n_2$, when in reality the system in \eqref{eq:sys} is compatible with model $n_1$ with a noise variance bounded by a known constant $\sigma_{n_1}\le \bar\sigma$. Then, using \eqref{eq:yn}, the output signal $\mathbf{y}=\begin{bmatrix}y(0) &\cdots &y(T-1)\end{bmatrix}'$ is given by
\begin{equation}\nonumber
\mathbf{y}=G_{n_1}\mathbf{u}_{n_1}+\Psi_{n_1}\bar x_{n_1}+\Psi_{n_1}Q_{n_1}v_{n_1}+H_{n_1}\mathbf{s}_{n_1}.
\end{equation}
Now, if we test the data against model $n_2$ by applying the procedure in \eqref{eq:fic}, we find that $\tilde p_{n_2}=\begin{bmatrix}\tilde v_{n_2}{}' &\mathbf{\tilde s}_{n_2}{}'\end{bmatrix}'$ is
\begin{equation}\label{eq:ptilden2}
\tilde p_{n_2}=\tilde\mu_{n_2n_1}+\tilde\Sigma_{n_2n_1}p_{n_1}
\end{equation}
where $p_{n_1}=\begin{bmatrix}v_{n_1}{}' &\mathbf{s}_{n_1}{}'\end{bmatrix}'$ and
\begin{equation}\nonumber
\begin{split}
\tilde\mu_{n_2n_1}&=\begin{bmatrix}\Psi_{n_2}Q_{n_2} &H_{n_2}\end{bmatrix}^+\left(G_{n_1}-G_{n_2}\right)\mathbf{u}+\eta_{n_2n_1}\\
\eta_{n_2n_1}&=\begin{bmatrix}\Psi_{n_2}Q_{n_2} &H_{n_2}\end{bmatrix}^+\left(\Psi_{n_1}\bar x_{n_1}-\Psi_{n_2}\bar x_{n_2}\right)\\
\tilde\Sigma_{n_2n_1}&=\begin{bmatrix}\Psi_{n_2}Q_{n_2} &H_{n_2}\end{bmatrix}^+\begin{bmatrix}\Psi_{n_1}Q_{n_1} &H_{n_1}\end{bmatrix}.
\end{split}
\end{equation}
Similarly,
\begin{equation}\label{eq:ptilden1n2}
\begin{split}
\tilde p_{n_1}&=\begin{bmatrix}\Psi_{n_1}Q_{n_1} &H_{n_1}\end{bmatrix}^+\begin{bmatrix}\Psi_{n_1}Q_{n_1} &H_{n_1}\end{bmatrix}p_{n_1}\\
&=\tilde\mu_{n_1n_2}+\tilde\Sigma_{n_1n_2}\tilde p_{n_2}.
\end{split}
\end{equation}
Since $p_{n_1}$ is a Gaussian random vector with a zero mean vector and a covariance $\sigma_{n_1}^2I$, it follows, from \eqref{eq:ptilden2}, that $\tilde p_{n_2}$ is a normally distributed random vector with a mean vector $\tilde\mu_{n_2n_1}$ and a covariance matrix $\sigma_{n_1}^2\tilde\Sigma_{n_2n_1}\tilde\Sigma_{n_2n_1}{}'$. Note in particular that $\tilde\mu_{n_2n_1}$ is an affine function of $\mathbf{u}$ and all other quantities are independent of $\mathbf{u}$. In the following theorem it is shown that if $\left |\tilde\mu_{n_2n_1}\right |$ is sufficiently large for any $n_2\ne n_1$, then the hypothesis testing in Section \ref{sec:ModelDiscrimination} brings statistically reliable results.

\ 

\begin{theorem}\label{prop:ID}
Suppose that real system data are compatible with the $n_1$-th model with some $\sigma_{n_1}$ less than or equal to a known value $\bar\sigma$. If, for any $n_2\in\left\{1,\cdots,N\right\}\setminus\{n_1\}$, either
\begin{equation}\label{eq:MuTildeCondition1}
|\tilde\mu_{n_2n_1}|>\left(\chi_{1-\alpha,T_{n_2}-1}+\chi_{1-\alpha,T_{n_1}-1}\left\|\tilde\Sigma_{n_2n_1}\right\|\right)\bar\sigma
\end{equation}
or
\begin{equation}\label{eq:MuTildeCondition2}
|\tilde\mu_{n_1n_2}|>\left(\chi_{1-\alpha,T_{n_1}-1}+\chi_{1-\alpha,T_{n_2}-1}\left\|\tilde\Sigma_{n_1n_2}\right\|\right)\bar\sigma
\end{equation}
then, with at least $100\times (1-\alpha)\%$ probability, only the $n_1$-th model is not rejected by the hypothesis test \eqref{eq:nset}.
\end{theorem}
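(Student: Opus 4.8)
The plan is to reduce the event ``only the $n_1$-th model survives the test'' to the single event $A:=\{\hat\sigma_{n_1}^2\le(\chi^2_{1-\alpha,T_{n_1}-1}/T_{n_1})\bar\sigma^2\}$ that the $n_1$-th model itself is not rejected, and then to estimate $P(A^c)\le\alpha$ by a routine chi-squared argument. The value of this reduction is that it dispenses with any union bound over the $N-1$ competing models, which is what keeps the confidence level exactly $1-\alpha$.

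First I would rewrite the test of Section~\ref{sec:ModelDiscrimination} in lifted-vector form: since $T_n\hat\sigma_n^2=|\tilde p_n|^2$, the $n$-th model is \emph{not} rejected by \eqref{eq:nset} precisely when $|\tilde p_n|\le\chi_{1-\alpha,T_n-1}\bar\sigma$, where $\chi_{1-\alpha,d}$ denotes $(\chi^2_{1-\alpha,d})^{1/2}$. Next, fix $n_2\ne n_1$ and note that, since the data are compatible with model $n_1$, $\tilde p_{n_2}$ is an affine function of $\tilde p_{n_1}$ (not merely of the latent $p_{n_1}$): with $M:=\begin{bmatrix}\Psi_{n_1}Q_{n_1} & H_{n_1}\end{bmatrix}$, relation \eqref{eq:ptilden1n2} gives $\tilde p_{n_1}=M^+M p_{n_1}$, hence $Mp_{n_1}=MM^+Mp_{n_1}=M\tilde p_{n_1}$, so that $\tilde\Sigma_{n_2n_1}p_{n_1}=\tilde\Sigma_{n_2n_1}\tilde p_{n_1}$ and \eqref{eq:ptilden2} becomes $\tilde p_{n_2}=\tilde\mu_{n_2n_1}+\tilde\Sigma_{n_2n_1}\tilde p_{n_1}$; the companion relation $\tilde p_{n_1}=\tilde\mu_{n_1n_2}+\tilde\Sigma_{n_1n_2}\tilde p_{n_2}$ is already \eqref{eq:ptilden1n2}. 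Now suppose a sample path lies in $A$, i.e.\ $|\tilde p_{n_1}|\le\chi_{1-\alpha,T_{n_1}-1}\bar\sigma$, yet the $n_2$-th model is also not rejected, i.e.\ $|\tilde p_{n_2}|\le\chi_{1-\alpha,T_{n_2}-1}\bar\sigma$. If \eqref{eq:MuTildeCondition1} holds, the triangle inequality together with $|\tilde\Sigma_{n_2n_1}\tilde p_{n_1}|\le\|\tilde\Sigma_{n_2n_1}\|\,|\tilde p_{n_1}|$ applied to the first affine relation gives $|\tilde\mu_{n_2n_1}|\le|\tilde p_{n_2}|+\|\tilde\Sigma_{n_2n_1}\|\,|\tilde p_{n_1}|\le(\chi_{1-\alpha,T_{n_2}-1}+\|\tilde\Sigma_{n_2n_1}\|\chi_{1-\alpha,T_{n_1}-1})\bar\sigma$, contradicting \eqref{eq:MuTildeCondition1}; if \eqref{eq:MuTildeCondition2} holds, the second affine relation yields the symmetric contradiction. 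Hence on $A$ every $n_2\ne n_1$ is rejected, and since $\{\text{only }n_1\text{ survives}\}\subseteq A$ trivially, we get equality: $A$ \emph{is} the event that only the $n_1$-th model survives \eqref{eq:nset}.

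It then remains to show $P(A)\ge1-\alpha$. Since $\tilde p_{n_1}=M^+Mp_{n_1}$ with $p_{n_1}\sim\mathcal N(0,\sigma_{n_1}^2I)$ by the model assumptions, and $M^+M$ is the orthogonal projection onto the range of $M'$, whose dimension $\rank M$ is at most $T_{n_1}-1$, the statistic $|\tilde p_{n_1}|^2/\sigma_{n_1}^2$ is chi-squared with at most $T_{n_1}-1$ degrees of freedom; combining this with $\sigma_{n_1}\le\bar\sigma$,
\[
P(A^c)=P\!\Bigl(\tfrac{|\tilde p_{n_1}|^2}{\bar\sigma^2}>\chi^2_{1-\alpha,T_{n_1}-1}\Bigr)\le P\!\Bigl(\tfrac{|\tilde p_{n_1}|^2}{\sigma_{n_1}^2}>\chi^2_{1-\alpha,T_{n_1}-1}\Bigr)\le\alpha.
\]
Together with the preceding paragraph this gives $P(\text{only }n_1\text{ survives})=P(A)\ge1-\alpha$, which is the assertion.

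The step I expect to be the crux is the reduction in the second paragraph: re-expressing $\tilde p_{n_2}$ through $\tilde p_{n_1}$ via the identity $MM^+M=M$, so that the $n_1$-rejection threshold directly bounds $|\tilde p_{n_2}|$ and the ``only $n_1$ survives'' event collapses onto the single event $A$. The probabilistic estimate for $P(A^c)$ is then standard, the only point requiring a little care being the degrees-of-freedom count for the quadratic form $p_{n_1}'M^+Mp_{n_1}$.
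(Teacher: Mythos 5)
Your proof is correct and follows essentially the same route as the paper's: the same triangle-inequality step shows that on the event $A=\{|\tilde p_{n_1}|\le\chi_{1-\alpha,T_{n_1}-1}\bar\sigma\}$ every competing model is deterministically rejected (the paper phrases this as a conditional probability equal to one and then multiplies by $P(A)\ge1-\alpha$), and the same chi-squared argument bounds $P(A)$. You are in fact slightly more explicit than the paper in two places --- justifying $\tilde\Sigma_{n_2n_1}p_{n_1}=\tilde\Sigma_{n_2n_1}\tilde p_{n_1}$ via $MM^{+}M=M$, and counting degrees of freedom through $\rank(M^{+}M)\le T_{n_1}-1$ rather than the cruder bound $|\tilde p_{n_1}|\le|p_{n_1}|$ --- but the argument is the same.
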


\ 

Note that the probability of the correct model being selected increases as we increase the critical values of the chi-squared distribution in \eqref{eq:MuTildeCondition1} and \eqref{eq:MuTildeCondition2}.

Based on this proposition, we design an input signal satisfying 
\begin{equation}\label{eq:gamma}
|\tilde\mu_{n_2n_1}|>\gamma_{(n_1,n_2)}
\end{equation}
for all $n_1,n_2\in\{1,\cdots,N\}$ satisfying $n_1<n_2$ where
\begin{equation}\nonumber
\begin{split}
\gamma_{(n_1,n_2)}&=\max\left\{\left(\chi_{1-\alpha,T_{n_2}-1}+\chi_{1-\alpha,T_{n_1}-1}\left\|\tilde\Sigma_{n_2n_1}\right\|\right)\bar\sigma,\right.\\
&\hspace{13mm}\left.\left(\chi_{1-\alpha,T_{n_1}-1}+\chi_{1-\alpha,T_{n_2}-1}\left\|\tilde\Sigma_{n_1n_2}\right\|\right)\bar\sigma\right\}.
\end{split}
\end{equation}
Note that, for any given unordered pair $(n_1,n_2)$ or any given two models, there is only one condition imposed by \eqref{eq:gamma}. Thus, the total number of conditions is $\frac{N(N-1)}{2}$.

Roughly speaking, purpose of the conditions \eqref{eq:MuTildeCondition1} and \eqref{eq:MuTildeCondition2} is to make $\tilde p_{n_1}$ and $\tilde p_{n_2}$ for any $n_2\ne n_1$ to be far apart from each other with high probability by placing the vectors $\tilde\mu_{n_2n_1}$ and $\tilde\mu_{n_1n_2}$ away from the zero vector. This can be achieved by suitably chosen $\mathbf{u}$. Alternatively, when the dimensions of $\tilde p_{n_1}$ and $\tilde p_{n_2}$ are the same, we can also pursue the same goal by increasing the Kullback-Leibler divergence
\begin{equation}\nonumber
\begin{split}
D_{KL}(\mathbf{\tilde f}_{n_2}||\mathbf{\tilde f}_{n_1})&=\frac{1}{2\sigma_{n_1}^2}\left|\tilde\mu_{n_2n_1}\right|^2+\frac{1}{2}\left\{{\rm{Tr}}\left(\tilde\Sigma_{n_2n_1}\tilde\Sigma_{n_2n_1}{}'\right)\right.\\
&\hspace{4mm}\left.-T_{n_2}-\ln\det\left(\tilde\Sigma_{n_2n_1}\tilde\Sigma_{n_2n_1}{}'\right)\right\}
\end{split}
\end{equation}
of $\mathbf{\tilde f}_{n_1}$ from $\mathbf{\tilde f}_{n_2}$ where $\mathbf{\tilde f}_{n_2}$ and $\mathbf{\tilde f}_{n_1}$ are the PDFs of $\tilde p_{n_2}$ and $\tilde p_{n_1}$, respectively. This formulation was studied in \cite{Skanda:2010}, however our condition in \eqref{eq:gamma}, combined with the hypothesis testing in Section \ref{sec:ModelDiscrimination}, provides an explicit reliability measure for model discrimination.

\subsection{A Family of Nonconvex Quadratic Optimization Problems}

We have shown that model discrimination is improved by  increasing the norm of the vectors $\tilde\mu_{n_2n_1}$ for each pair of models $n_1, n_2$. With this in mind, there are a number of reasonable formulations of specific optimization problems, trading off different measures of signal size and model discrimination ability.

Since each $\tilde\mu_{n_2n_1}$ is an affine functions of the control input $\mathbf{u}$, unless constraints are imposed on the input it is clear that maximizing model discrimination leads to an unbounded and ill-posed optimization problem. Given some positive bounds $\bar u, \bar y$ we consider some natural constraints:
\begin{align}
Z_{i2}(\mathbf{u}):=&\frac{1}{\bar u^2}|\mathbf{u}|^2,\label{eq:InputPower}\\
Z_{o2}(\mathbf{u}):= &\frac{1}{\bar y^2}\max_{n\in\{1,\cdots,N\}}|G_n\mathbf{u}+\Psi_n\bar x_n|^2,\label{eq:OutputPower}\\
Z_{i\infty}(\mathbf{u}):=&\frac{1}{\bar u}|\mathbf{u}|_\infty,\label{eq:InputAmplitude}\\
Z_{o\infty}(\mathbf{u}):= &\frac{1}{\bar y}\max_{n\in\{1,\cdots,N\}}|G_n\mathbf{u}+\Psi_n\bar x_n|_\infty,\label{eq:OutputAmplitude}
\end{align}
where the subscripts refer to input and output, and 2-norm and $\infty$-norm. Notice that the constraints on the output are the maximum over all models in the set. There may be other constraints that are natural to consider, e.g. move size (change in $u$), or the size of other states in a state-space model, or a weighted combination of inputs and outputs as in a linear quadratic regulator. These could also easily be used in the framework we propose.

Let $M=\frac{N(N-1)}{2}$ be the total number of the unordered pairs of the models, and
\begin{align}
\overline G_m &:=\begin{bmatrix}\Psi_{n_2}Q_{n_2} &H_{n_2}\end{bmatrix}^+\left(G_{n_1}-G_{n_2}\right),\\ 
\overline\eta_m& :=\eta_{n_2n_1}, \ \ \overline\gamma_m:=\gamma_{(n_1,n_2)}.
\end{align}
We consider two natural measures of model discrimination, a ``worst case'' criterion, derived from \eqref{eq:gamma}:
\begin{equation}\label{eq:Discrimination}
V_\infty(\mathbf{u}):=\min_{m\in\{1,\cdots,M\}}\frac{1}{\overline\gamma_m^2}\left|\overline G_m\mathbf{u}+\overline\eta_m \right|^2
\end{equation}
and a somewhat easier ``weighted average'' case:
\begin{equation}\label{eq:AverageDiscrimination}
V_2(\mathbf{u}):=\frac{1}{M}\sum_{m=1}^M\frac{w_m}{\overline\gamma_m^2}\left|\overline G_m\mathbf{u}+\overline\eta_m \right|^2\\
\end{equation}
where $w_m$'s are the weights. The latter may be appropriate if based on prior data certain models are highly likely and should be favored for discrimination.

Note that all of  \eqref{eq:InputPower}-\eqref{eq:OutputAmplitude} and \eqref{eq:Discrimination}, \eqref{eq:AverageDiscrimination} are convex quadratic functions of $\mathbf{u}$. With these signal properties and discrimination factors, we can consider either the ``least costly'' input signal for guaranteed discrimination: 
\begin{equation}\label{eq:InputModel}
\begin{split}
&\min_{\mathbf{u}\in\mathbb{R}^T} Z_a(\mathbf{u})\\
&\hspace{2mm}{\rm{s.t.}}\ V_b(\mathbf{u})\ge 1
\end{split}
\end{equation}
with $a\in\{i2, o2, i\infty, o\infty\}$ and $b\in\{\infty, 2\}$. Alternatively, we can consider maximizing discrimination reliability subject to hard constraints on the input:
\begin{equation}\label{eq:ModelInput}
\begin{split}
&\max_{\mathbf{u}\in\mathbb{R}^T}V_b(\mathbf{u})\\
&\hspace{1mm}{\rm{s.t.}}\ Z_a(\mathbf{u})\le 1.
\end{split}
\end{equation}
Both of these formulations are {\em nonconvex} quadratic optimization problems, due to the constraint in \eqref{eq:InputModel} and the maximization in \eqref{eq:ModelInput}. There is no known polynomial-time algorithm for nonconvex quadratic optimization, and it is generally considered unlikely one will be found since they belong to the class of NP-hard problems \cite{Luo:2007}. For this reason, we pursue a convex relaxation technique.

\section{RELAXATION TO A SEMIDEFINITE PROGRAM}\label{sec:Computation}

Semidefinite relaxation is a general approach for problems of the form $\min_x f_0(x)$ subject to $f_i(x) \le 0$ where $x\in\mathbb R^{n-1}$ and $f_i(x) = x'A_ix+2b_i'x+c$, $i = 0, 1, 2, ..., m$. In particular, it is not assumed that $A_i$ are positive semidefinite (which would bring convexity). These functions can be homogenised as
\[
f_i(x) = \begin{bmatrix}x\\1\end{bmatrix}' \begin{bmatrix}A_i & b_i' \\ b_i & c_i\end{bmatrix}\begin{bmatrix}x\\1\end{bmatrix} =: \xi' Q_i \xi.
\]
The simple fact that $\xi' Q_i \xi=\tr (\xi' Q_i \xi) = \tr (Q_i \xi\xi')$, and the fact that any $n\times n$ matrix $X=X'\ge 0$ which is rank one can be decomposed as $X=\xi \xi'$ leads to the equivalent problem:
\[
\min_{X\in S_+^n} \tr(Q_0 X):\tr(Q_iX)\le 0, X_{n,n} = 1, \rank(X) = 1.
\]
This is an exact reformulation of the problem, and all constraints are convex except for the rank constraint. Semidefinite relaxation consists of dropping the rank constraint, which results in a semidefinite program \cite{Nesterov:1998}, \cite{Luo:2007}. Relaxations always give an ``optimistic'' value, since they optimize the same objective function over a larger feasible set. The quality of a relaxation is determined by the gap between the true optimum and the optimum of the relaxation.

To apply this idea to the model discrimination problem, we construct homogenised forms of the key quantities from the previous section. The discrimination factor is given by
\begin{equation}\label{eq:homogenization1}
\left|\overline G_m\mathbf{u}+\overline\eta_m\right|^2=\begin{bmatrix}\mathbf{u}\\ 1\end{bmatrix}'\begin{bmatrix}\overline G_m{}'\\ \overline\eta_m{}'\end{bmatrix}\begin{bmatrix}\overline G_m &\overline\eta_m\end{bmatrix}\begin{bmatrix}\mathbf{u}\\ 1\end{bmatrix}
\end{equation}
and the various signal constraints can be represented as
\begin{equation}\label{eq:homogenization2}
\begin{split}
\left|\mathbf{u}\right|^2&=\begin{bmatrix}\mathbf{u}\\ 1\end{bmatrix}'\begin{bmatrix}\mathbf{I} &\mathbf{0}\\ \mathbf{0}' &0\end{bmatrix}\begin{bmatrix}\mathbf{u}\\ 1\end{bmatrix},\\
\left|G_n\mathbf{u}+\Psi_n\bar x_n\right|^2&=\begin{bmatrix}\mathbf{u}\\ 1\end{bmatrix}'\begin{bmatrix}G_n{}'\\ \bar x_n{}'\Psi_n{}'\end{bmatrix}\begin{bmatrix}G_n &\Psi_n\bar x_n\end{bmatrix}\begin{bmatrix}\mathbf{u}\\ 1\end{bmatrix},\\
\left|\mathbf{u}_i\right|^2&=\begin{bmatrix}\mathbf{u}\\ 1\end{bmatrix}'\begin{bmatrix}e_ie_i{}' &\mathbf{0}\\ \mathbf{0}' &0\end{bmatrix}\begin{bmatrix}\mathbf{u}\\ 1\end{bmatrix},\\
\left|\left(G_n\mathbf{u}+\Psi_n\bar x_n\right)_i\right|^2&=\begin{bmatrix}\mathbf{u}\\ 1\end{bmatrix}'\begin{bmatrix}G_n{}'\\ \bar x_n{}'\Psi_n{}'\end{bmatrix}e_ie_i{}'\begin{bmatrix}G_n &\Psi_n\bar x_n\end{bmatrix}\begin{bmatrix}\mathbf{u}\\ 1\end{bmatrix}.
\end{split}
\end{equation}
where $e_i$ is the indicator vector for element $i$, and $|x|_\infty\le a$ can be imposed for any vector by $|x_i|^2\le a^2 \ \forall \ i$.

These can be equivalently represented in terms of a matrix variable $U\in S_+^{T+1}$:
\begin{equation}\nonumber
\begin{split}
\left|\overline G_m\mathbf{u}+\overline\eta_m\right|^2&={\rm{Tr}}\left(\begin{bmatrix}\overline G_m{}'\\ \overline\eta_m{}'\end{bmatrix}\begin{bmatrix}\overline G_m &\overline\eta_m\end{bmatrix}U\right)\\
\left|\mathbf{u}\right|^2&={\rm{Tr}}\left(\begin{bmatrix}\mathbf{I} &\mathbf{0}\\ \mathbf{0}' &0\end{bmatrix}U\right)\\
\left|G_n\mathbf{u}+\Psi_n\bar x_n\right|^2&={\rm{Tr}}\left(\begin{bmatrix}G_n{}'\\ \bar x_n{}'\Psi_n{}'\end{bmatrix}\begin{bmatrix}G_n &\Psi_n\bar x_n\end{bmatrix}U\right)\\
\left|\mathbf{u}_i\right|^2&={\rm{Tr}}\left(\begin{bmatrix}e_ie_i{}' &\mathbf{0}\\ \mathbf{0}' &0\end{bmatrix}U\right)\\
\left|\left(G_n\mathbf{u}+\Psi_n\bar x_n\right)_i\right|^2&={\rm{Tr}}\left(\begin{bmatrix}G_n{}'\\ \bar x_n{}'\Psi_n{}'\end{bmatrix}e_ie_i{}'\begin{bmatrix}G_n &\Psi_n\bar x_n\end{bmatrix}U\right).
\end{split}
\end{equation}
with additional conditions $U_{T+1, T+1}=1$, $U\ge 0$, and ${\rm{rank}}(U)=1$. The optimization problems are now described in terms of $U$ instead of $\mathbf{u}$. The SDR is completed by dropping the rank constraint ${\rm{rank}}(U)=1$.

Given the various $Z_a(\mathbf u)$ and $V_b(\mathbf u)$ from the previous section, we denote the semidefinite relaxation forms of these by $\hat Z_a(U)$ and $\hat V_b(U)$. So we can again have general problems of the form
\begin{equation}\label{eq:InputModelSDP}
\min_{U\in S_+^{T+1}}\hat Z_a(U) {\textrm{ s.t. }}\ \hat V_{b}(U)\ge 1, U_{T+1, T+1} = 1.
\end{equation}
or
\begin{equation}\label{eq:ModelInputSDP}
\max_{U\in S_+^{T+1}}\hat V_b(U) {\textrm{ s.t. }}\ \hat Z_a(U)\le 1,U_{T+1, T+1} = 1.
\end{equation}
as relaxations of \eqref{eq:InputModel} and \eqref{eq:ModelInput}, respectively. Clearly, one can also add multiple constraints (e.g. on the input and output) and retain the same SDP structure, or alternative cost functions such as LQR. 
We do not give every possible permutation here, but for example, the relaxation of 
\begin{equation}
\begin{split}
&\max_{\mathbf{u}\in\mathbb{R}^T}V_2(\mathbf{u})\\
&\hspace{1mm}{\rm{s.t.}}\ Z_{i\infty}(\mathbf{u})\le 1,
\end{split}
\end{equation}
is given by
\begin{equation}
\begin{split}
&\max_{U\in S_+^{T+1}}\hat V_2(U)\\
&\hspace{5mm}{\rm{s.t.}}\ \hat Z_{i\infty}(U)\le 1,\\
&\hspace{10mm}U_{T+1, T+1} = 1.
\end{split}
\end{equation}
where
\begin{align}
\hat V_2(U)&=\frac{1}{m}\sum_{m=1}^M \frac{w_m}{\bar\gamma_m^2}{\rm{Tr}}\left(\begin{bmatrix}\overline G_m{}'\\ \overline\eta_m{}'\end{bmatrix}\begin{bmatrix}\overline G_m &\overline\eta_m\end{bmatrix}U\right),\\
\hat Z_\infty(U)&=\frac{1}{\bar u^2}\max_{i \in \{0, 1, ..., T\}}{\rm{Tr}}\left(\begin{bmatrix}e_ie_i{}' &\mathbf{0}\\ \mathbf{0}' &0\end{bmatrix}U\right).
\end{align}

The SDP relaxations generally give ``optimistic'' results, i.e. under-estimations of required signal power in \eqref{eq:InputModel} and over-estimations of discrimination power in \eqref{eq:ModelInput}. However, the advantage is that they can be efficiently solved (polynomial time to a given accuracy) using freely available solvers such as Sedumi \cite{Sturm:1999} and interfaces such as Yalmip \cite{lofberg2004yalmip} and CVX \cite{CVX:2012}.

\section{OPTIMAL SOLUTIONS OF THE TRUE PROBLEMS FROM THE RELAXATION}\label{sec:Rank}

Although in general the objective value of \eqref{eq:InputModel} and \eqref{eq:InputModelSDP}, or \eqref{eq:ModelInput} and \eqref{eq:ModelInputSDP}, will be different, there are some cases in which they are the same. That is, there is no gap between the optimal values of the relaxed problem and the true problem. In particular, this is the case for either the ``least costly'' formulation \eqref{eq:InputModel} or ``traditional'' formulation \eqref{eq:ModelInput} under the following situations:
\begin{enumerate}
\item The signal conditions $Z_a(\mathbf{u})$ are of ``power'' type, i.e. $a=i2$ or $o2$ or another quadratic, e.g. LQR, \textbf{and}
\begin{enumerate}
\item The model discrimination constraint is of the weighted average form $V_2(\mathbf{u})$, \textbf{or}
\item The model discrimination is absolute $V_\infty(\mathbf{u})$ but there is only two models to discriminate between, giving $m=1$ in \eqref{eq:Discrimination}.
\end{enumerate}
\end{enumerate}
Under any of these scenarios,  \eqref{eq:InputModel} or \eqref{eq:ModelInput} become problems to optimize (minimize or maximize) a quadratic function subject to a single quadratic constraint. For this problem structure, it is known that the semidefinite relaxation has no gap \cite{Polyak:1998}, \cite[Appendix B]{boyd2004convex}.

Let us also consider a more general case involving more constraints or models to distinguish:
\begin{equation}\label{eq:SDP}
\begin{split}
&\min_{U\in S_+^{T+1}}\max_{\ell\in\{1,\cdots,L\}}{\rm{Tr}}(P_\ell U)\\
&\hspace{8mm}{\rm{s.t.}}\ {\rm{Tr}}(R_kU)\ge r,\ k=1,\cdots,K\\
&\hspace{14mm}{\rm{Tr}}(e_{T+1}e_{T+1}{}'U)=1,
\end{split}
\end{equation}
where $L$ and $K$ are positive integers, $r$ is a constant, and $P_\ell$'s and $R_k$'s are $(T+1)\times (T+1)$ matrices. It should be clear that SDPs in \eqref{eq:InputModelSDP} and \eqref{eq:ModelInputSDP} are special cases of this SDP. In particular, the ``zero gap'' cases we described above correspond to $L=K=1$.

Then, the following proposition is a straightforward extension \cite{Pataki:1998}, \cite{Barvinok:1995}, \cite{YeLecture}. 
\begin{prop}\label{prop:RankReductionL1}
If the optimization problem in \eqref{eq:SDP} with $L=1$ has an optimal solution, then there exists an optimal solution $U^*$ satisfying
\begin{equation}\nonumber
\frac{{\rm{rank}}(U^*)\left({\rm{rank}}(U^*)+1\right)}{2}\le K+1.
\end{equation}
\end{prop}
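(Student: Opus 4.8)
The plan is to recast the $L=1$ problem so that the classical rank bound for equality‑constrained SDPs \cite{Pataki:1998}, \cite{Barvinok:1995}, \cite{YeLecture} applies, and then to carry out an explicit rank‑reduction step.

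\emph{Step 1: freeze the inequality constraints.} Let $U_0$ be an optimal solution of \eqref{eq:SDP} with $L=1$ (it exists by hypothesis) and put $v^\star:=\tr(P_1U_0)$. I would introduce the auxiliary SDP
\begin{equation}\nonumber
\min_{U\in S_+^{T+1}}\ \tr(P_1U)\ \text{ s.t. }\ \tr(R_kU)=\tr(R_kU_0),\ k=1,\dots,K,\quad \tr(e_{T+1}e_{T+1}{}'U)=1.
\end{equation}
Any $U$ feasible for this problem satisfies $\tr(R_kU)=\tr(R_kU_0)\ge r$, hence is feasible for \eqref{eq:SDP}; conversely $U_0$ is feasible here with objective $v^\star$. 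Therefore the two problems have the same optimal value $v^\star$, every optimal solution of the auxiliary problem is optimal for \eqref{eq:SDP}, and in particular $U_0$ itself is an optimal solution of the auxiliary problem. The gain is that the auxiliary problem has exactly $K+1$ affine (equality) constraints, so we no longer have to track which inequalities are active.

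\emph{Step 2: reduce the rank.} Starting from any optimal solution $U^\star$ of the auxiliary problem with $\rho:=\rank(U^\star)$ satisfying $\rho(\rho+1)/2>K+1$, I would factor $U^\star=VV'$ with $V\in\mathbb R^{(T+1)\times\rho}$ of full column rank and examine the curve $U(t)=V(I+t\Delta)V'$ for symmetric $\Delta\in\mathbb R^{\rho\times\rho}$. Each constraint value is affine in $t$, e.g. $\tr(R_kU(t))=\tr(R_kU^\star)+t\,\tr(V'R_kV\Delta)$, and likewise for the normalization; requiring these $K+1$ first‑order coefficients to vanish imposes $K+1$ homogeneous linear conditions on the space of symmetric $\Delta$, which has dimension $\rho(\rho+1)/2>K+1$, so a nonzero such $\Delta$ exists. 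Then $U(t)$ is feasible for every $t$ with $I+t\Delta\succeq 0$, in particular on an open interval around $t=0$ since $I\succ0$; because $U^\star$ is optimal over this interval, the affine function $\tr(P_1U(t))$ must be constant, so $U(t)$ is optimal wherever it is feasible. Replacing $\Delta$ by $-\Delta$ if necessary so that $\lambda_{\min}(\Delta)<0$, set $t^\star=-1/\lambda_{\min}(\Delta)>0$: then $I+t^\star\Delta\succeq0$ is singular, and since $V$ has full column rank $\rank(U(t^\star))=\rank(I+t^\star\Delta)<\rho$, while $U(t^\star)$ remains feasible and optimal. Iterating, the rank strictly decreases until $\rank(U^\star)\big(\rank(U^\star)+1\big)/2\le K+1$, and by Step 1 this final $U^\star$ is an optimal solution of \eqref{eq:SDP} as well, which is the claim.

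\emph{Main obstacle.} The routine ingredients are the affine dependence of each trace on $t$ and the identity $\rank(VMV')=\rank(M)$ when $V$ has full column rank (so that singularity of $I+t^\star\Delta$ genuinely drops the rank). The step that needs the most care is the bookkeeping that delivers the constant $K+1$ rather than $K+2$: one must notice that the objective direction $\tr(V'P_1V\Delta)$ need \emph{not} be constrained, because two‑sided feasibility of $U(t)$ near $t=0$ together with optimality of $U^\star$ already forces the objective to be constant along the curve; and one must verify that freezing the inequalities in Step 1 does not enlarge the feasible set, so that low‑rank optimality transfers back to the original problem \eqref{eq:SDP}.
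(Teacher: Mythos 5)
Your proof is correct. Note that the paper itself gives no proof of this proposition --- it is stated as a known consequence of the cited rank bounds of Pataki, Barvinok, and Ye --- so the natural comparison is with the paper's proof of Proposition \ref{prop:RankReduction} (the $L>1$ case), which uses the same perturbation machinery: there, an eigendecomposition $U_1^*=W\begin{bmatrix}\Lambda &\mathbf 0\\ \mathbf 0&\mathbf 0\end{bmatrix}W'$ plays the role of your factorization $U^\star=VV'$, and the curve $U_1^*+a\breve U_1$ is your $U(t)=V(I+t\Delta)V'$. The substantive difference, which you identify and handle correctly, is the degree-of-freedom count: for $L>1$ the paper must additionally impose $\tr(P_{\ell^*}\breve U_1)=0$ because the objective $\max_\ell\tr(P_\ell U)$ is only piecewise linear, costing one extra condition and yielding the bound $K+2$; for $L=1$ the objective is genuinely linear, so two-sided feasibility of the curve near $t=0$ together with optimality of $U(0)$ forces $\tr(P_1 U(t))$ to be constant without constraining that direction, which is precisely what buys $K+1$. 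Your Step 1 (replacing $\tr(R_kU)\ge r$ by the equalities $\tr(R_kU)=\tr(R_kU_0)$) is a clean way to guarantee that feasibility is preserved all the way to $t^\star$ rather than only locally; it is equivalent in effect to the paper's direct requirement $\tr(R_k\breve U_1)=0$ for all $k$, and your observation that the frozen problem's feasible set is contained in the original one correctly transfers optimality back to \eqref{eq:SDP}. All the technical steps (existence of a nonzero $\Delta$ from the dimension count $\rho(\rho+1)/2>K+1$, the choice $t^\star=-1/\lambda_{\min}(\Delta)$, and $\rank(VMV')=\rank(M)$ for $V$ of full column rank) check out, and the iteration terminates since the rank strictly decreases.
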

\ 

\begin{rem}\label{rem:Rank1}
When $L=K=1$, as in the cases described above,  Proposition \ref{prop:RankReductionL1} guarantees that a solution $U^*$ to the corresponding SDP has rank $1$, which implies that there exists $\mathbf{u}^*$ satisfying $U^*=\begin{bmatrix}\mathbf{u}^*{}' &1\end{bmatrix}'\begin{bmatrix}\mathbf{u}^*{}' &1\end{bmatrix}$. And, this $\mathbf{u}^*$ is the optimal solution to the original optimization problem and satisfies $\hat Z(U^*)=Z(\mathbf{u}^*)$ and $\hat V(U^*)=V(\mathbf{u}^*)$.
\end{rem}

In the case that $L>1$, we develop a slightly weaker result:
\begin{prop}\label{prop:RankReduction}
If the optimization problem in \eqref{eq:SDP} has an optimal solution $U_1^*$, then there exists an optimal solution $U_2^*$ satisfying
\begin{equation}\label{eq:RankBound}
\frac{{\rm{rank}}(U_2^*)\left({\rm{rank}}(U_2^*)+1\right)}{2}\le K+2.
\end{equation}
\end{prop}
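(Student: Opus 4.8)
The plan is to imitate the Pataki/Barvinok-type rank-reduction argument that underlies Proposition~\ref{prop:RankReductionL1}, carrying along the single extra degree of freedom introduced by the inner $\max_\ell$ in the objective of \eqref{eq:SDP}. First I would collapse the min--max into a single linear objective by duality: \eqref{eq:SDP} is a convex SDP, so (under a constraint qualification, e.g. strict feasibility) strong duality holds and the assumed optimal $U_1^*$ comes with multipliers --- a vector $\lambda\in\mathbb R_{\ge0}^L$ with $\sum_\ell\lambda_\ell=1$ supported on the active indices $\mathcal L^\star:=\{\ell:\tr(P_\ell U_1^*)=\tau^*\}$, where $\tau^*:=\max_\ell\tr(P_\ell U_1^*)$; a vector $\mu\in\mathbb R_{\ge0}^K$ supported on the active $R_k$-constraints; a scalar $\nu$; and $Z\in S_+^{T+1}$ with $ZU_1^*=0$ --- satisfying $\sum_\ell\lambda_\ell P_\ell-\sum_k\mu_k R_k+\nu\, e_{T+1}e_{T+1}{}'=Z$. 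Thus $U_1^*$ also solves $\min_U\tr(\bar P U)$ over the same constraint set, with $\bar P:=\sum_\ell\lambda_\ell P_\ell$ and optimal value $\tau^*$. Writing $U_1^*=VV'$ with $V$ of full column rank and using $ZV=0$, multiplying the stationarity identity by $V'$ and $V$ on the two sides yields the key relation: $V'\bar P V$ lies in the span of the matrices $\{V'R_kV:k\text{ active}\}\cup\{V'e_{T+1}e_{T+1}{}'V\}$, a subspace of dimension at most $K+1$.

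Next I would run the rank-reducing perturbation on a minimal-rank optimal solution, which I again call $U^*=VV'$ with $\rho:=\rank U^*$ and active sets $\mathcal K^\star$, $\mathcal L^\star$ (with $\mathcal L^\star\neq\emptyset$). For a symmetric $\rho\times\rho$ matrix $\Delta$ put $U(s):=V(I_\rho+s\Delta)V'$, positive semidefinite with the same range for small $|s|$. Imposing $\langle V'e_{T+1}e_{T+1}{}'V,\Delta\rangle=0$ and $\langle V'R_kV,\Delta\rangle=0$ for $k\in\mathcal K^\star$ fixes the normalization and the active $R_k$-constraints (inactive ones survive for small $|s|$), and --- via the relation of the first step --- these conditions already entail $\langle V'\bar P V,\Delta\rangle=\sum_\ell\lambda_\ell\langle V'P_\ell V,\Delta\rangle=0$, so the supporting combination $\tr(\bar P U(s))$ is constant. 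The remaining requirement is that $U(s)$ stay optimal for the \emph{min--max} problem, i.e. that no active $\tr(P_\ell U(s))$, $\ell\in\mathcal L^\star$, rise above $\tau^*$. I would enforce this with one more scalar linear condition forcing the directional derivatives $\langle V'P_\ell V,\Delta\rangle$ ($\ell\in\mathcal L^\star$) to agree; together with $\sum_\ell\lambda_\ell\langle V'P_\ell V,\Delta\rangle=0$, $\lambda\ge 0$, $\sum_\ell\lambda_\ell=1$, their common value is then $0$, so every active $\tr(P_\ell U(s))$ is pinned at $\tau^*$ and $U(s)$ stays feasible and optimal. If $\tfrac{\rho(\rho+1)}{2}>K+2$ the resulting linear system on $\Delta$ (at most $K+1$ conditions from the first step plus one) has a nonzero solution; increasing $|s|$ until $I_\rho+s\Delta$ drops rank --- handling the standard wrinkle that an inactive constraint may become active first exactly as in \cite{Pataki:1998},\cite{Barvinok:1995},\cite{YeLecture} --- yields an optimal solution of strictly smaller rank, contradicting minimality and giving \eqref{eq:RankBound}.

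The main obstacle is the part of the second step where the inner maximum is controlled. Preserving each active $\tr(P_\ell U(s))$ independently would cost $|\mathcal L^\star|$ extra constraints and only give the much weaker bound $\tfrac{\rho(\rho+1)}{2}\le K+L+1$; the whole point is that the supporting multiplier $\lambda$ must be shown to reduce these $|\mathcal L^\star|$ obligations to a single effective one. Making that reduction rigorous --- possibly combined with a Carath\'eodory-type thinning of the support of $\lambda$, or with a careful choice of which face of the optimal set to place $U_2^*$ in --- is where I expect the real work to be, and it is precisely this that makes the bound $K+2$ instead of the $K+1$ of Proposition~\ref{prop:RankReductionL1}.
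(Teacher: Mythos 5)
Your overall strategy is the same one the paper uses: a Pataki/Barvinok-type perturbation of $U_1^*$ restricted to its range, with the objective contributing extra linear annihilation conditions on top of the $K+1$ coming from the constraints and the normalization, followed by a finite step that drops the rank. The difference is in how the min--max objective is handled, and it is exactly there that your proposal (by your own admission) is not complete. For the record, the paper's proof is much blunter than yours: it diagonalizes $U_1^*$, picks the \emph{single} index $\ell^*=\arg\max_{\ell}\tr(P_\ell U_1^*)$, and requires the perturbation direction $\breve U_1$ to satisfy only $\tr(P_{\ell^*}\breve U_1)=0$, $\tr(R_k\breve U_1)=0$ for $k=1,\dots,K$, and $\tr(e_{T+1}e_{T+1}{}'\breve U_1)=0$ --- that is $1+K+1=K+2$ linear conditions on the $\tfrac{\rho(\rho+1)}{2}$-dimensional space of symmetric matrices supported on the range of $U_1^*$, whence a nonzero direction exists whenever \eqref{eq:RankBound} fails, and a finite step along it reduces the rank while preserving positive semidefiniteness and all the listed traces. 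No duality, no multipliers, no active set $\mathcal L^\star$: the values $\tr(P_\ell U)$ for $\ell\ne\ell^*$ are simply not controlled, and optimality of the perturbed point is asserted from the preservation of $\tr(P_{\ell^*}U)$ alone. Your first paragraph (strong duality, the supporting combination $\bar P=\sum_\ell\lambda_\ell P_\ell$, and the observation that $V'\bar PV$ lies in the span of the active $V'R_kV$ and $V'e_{T+1}e_{T+1}{}'V$) is machinery the paper does not use at all.

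The genuine gap in your write-up is the one you flag yourself, and it is not a cosmetic one. Forcing the directional derivatives $\langle V'P_\ell V,\Delta\rangle$, $\ell\in\mathcal L^\star$, to agree is $|\mathcal L^\star|-1$ scalar conditions, not ``one more scalar linear condition''; combined with the $K+1$ others this only yields $\tfrac{\rho(\rho+1)}{2}\le K+|\mathcal L^\star|+1$, which reduces to the claimed $K+2$ only when $|\mathcal L^\star|\le 2$. The Carath\'eodory thinning you mention does not obviously help either: thinning the support of $\lambda$ changes which scalarized problem $U_1^*$ solves, but the indices you drop from the support are still active at value $\tau^*$ and can still be pushed above $\tau^*$ by the perturbation. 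There is also a second, finite-step issue you only address for the $R_k$'s: since the rank-dropping value of $s$ is not infinitesimal and each $\tr(P_\ell U(s))$ is affine in $s$, an index $\ell\notin\mathcal L^\star$ that you imposed nothing on can overtake $\tau^*$ before the rank drops; the standard ``restart with a larger active set'' remedy then enlarges $\mathcal L^\star$ and further degrades your count. So as written your argument proves only the weaker $K+L+1$-type bound, and the reduction to $K+2$ --- which is the entire content of the proposition beyond Proposition~\ref{prop:RankReductionL1} --- is missing. It is worth noting that the paper's own proof achieves $K+2$ only by not engaging with this difficulty (it preserves $\tr(P_{\ell^*}U)$ and silently assumes no other $\ell$ overtakes it), so your diagnosis of where the real work lies is accurate; but diagnosing the obstacle is not the same as overcoming it, and your proposal does not.
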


\
The proof of this proposition provides a simple rank reduction algorithm and we can employ this algorithm in order to search for an optimal solution of the lowest possible rank.

\section{SUBOPTIMAL SOLUTIONS VIA THE SDR AND RANDOMIZATION}\label{sec:quality}

If an optimal solution $U^*$ of one of the SDPs in \eqref{eq:InputModelSDP} and \eqref{eq:ModelInputSDP} satisfies ${\rm{rank}}(U^*)=1$ as studied in Section \ref{sec:Rank}, then the solution can be easily decomposed into an optimal solution of the original optimization problem. If $U^*$  is of low rank, but greater than 1, then this in a sense reduces the dimensionality of the search space for a control input. In fact, there are some cases where it can be proved that near-optimal feasible solutions can be generated from the relaxation and sampling schemes.

Considering  problem \eqref{eq:ModelInput} with input amplitude is constrained, i.e. with $Z_a(\mathbf{u})=Z_{i\infty}(\mathbf{u})$, and the weighted average model discrimination as an objective function, i.e. $V_2(\mathbf u)$, then the results of \cite[Sec 4.1]{Nesterov:1998} can be directly applied to prove that
\begin{equation}\nonumber
\frac{2}{\pi}\hat V_2(U^*)\le V_2(\mathbf{u}^*)\le \hat V_2(U^*).
\end{equation}
Furthermore, a simple randomization procedure with rounding will achieve that accuracy in expectation. We omit details here due to space, but it is the same as in the proof of the main result in \cite{Nesterov:1998}, and also applied in \cite{Manchester:2010}.

Moving to more general problems, consider the optimization problem in \eqref{eq:InputModel} with the choice of \eqref{eq:Discrimination} and its stochastic version
\begin{equation}\nonumber
\begin{split}
&\min_{Q,q} E\left[Z(Q\xi+q)^2\right]\\
&\hspace{1mm}{\rm{s.t.}}\ E\left[\frac{1}{\overline\gamma_m^2}\left|\overline G_m\left(Q\xi+q\right)+\overline\eta_m\right|^2\right]\ge 1,\ m=1,\cdots,M
\end{split}
\end{equation}
with a random vector $\xi\in\mathbb{R}^T$ that has a standard normal distribution. Then, we can rewrite this stochastic optimization problem as
\begin{equation}\label{eq:InputPowerModelStochastic}
\begin{split}
&\min_{Q,q} Z_{SDP}\left(\begin{bmatrix}QQ'+qq' &q\\ q' &1\end{bmatrix}\right)^2\\
&\hspace{1mm}{\rm{s.t.}}\ \frac{1}{\overline\gamma_m^2}{\rm{Tr}}\left(\begin{bmatrix}\overline G_m{}'\\ \overline\eta_m{}'\end{bmatrix}\begin{bmatrix}\overline G_m &\overline\eta_m\end{bmatrix}\begin{bmatrix}QQ'+qq' &q\\ q' &1\end{bmatrix}\right)\ge 1,\\
&\hspace{59mm}m=1,\cdots,M.
\end{split}
\end{equation}
It can be shown that this optimization problem is the same as the SDP in \eqref{eq:InputModelSDP} in the sense that $U^*=\begin{bmatrix}Q^*Q^*{}'+q^*q^*{}' &q^*\\ q^*{}' &1\end{bmatrix}$ for an optimal solution $U^*$ to the SDP in \eqref{eq:InputModelSDP} and an optimal solution $Q^*$ and $q^*$ to the SDP in \eqref{eq:InputPowerModelStochastic}. Thus, the SDP in \eqref{eq:InputModelSDP} can be viewed as a stochastic version of the original optimization problem in \eqref{eq:InputModel}. Similarly, the other SDPs can be viewed as stochastic versions of their original optimization problems in \eqref{eq:InputModel} and \eqref{eq:ModelInput}, respectively.

In light of the fact that the SDPs are stochastic versions of the original optimization problems, randomization approaches (e.g. \cite{Luo:2007} and \cite{Manchester:2012}) are natural choices in order to extract, from the solutions of the SDPs, a feasible solution to the original optimization problems. We employ the following algorithm for each optimization problem.
\begin{alg}\label{alg:fea}
Given an SDP from \eqref{eq:InputModelSDP} and \eqref{eq:ModelInputSDP}, denote its optimal solution by $U^*=\begin{bmatrix}Q^*Q^*{}'+q^*q^*{}' &q^*\\ q^*{}' &1\end{bmatrix}$.

Step $1$ : Generate a realization $\xi\in\mathbb{R}^T$ of a standard normal distribution.

Step $2$ :  Search for a constant $a^*$ such that $a^*Q^*\xi+q^*$ (i) is a feasible solution to the original optimization problem corresponding to the SDP and (ii) produces a better optimal value for the original optimization problem than $aQ^*\xi+q^*$ with any other constant $a$. If such a constant does not exist, go to Step $1$.

Step $3$ : Update $\mathbf{\hat u}=a^*Q^*\xi+q^*$ if this vector $a^*Q^*\xi+q^*$ produces the best objective value so far through this algorithm. If the number of the generations of $\xi$ is less than a certain positive number, then go to Step $1$. Otherwise, terminate the algorithm.
\end{alg}

Even though the vectors that are generated in Step $1$ suggest a solution to the original optimization problem in the sense of average, each vector may not satisfy the constraints of the original optimization problem. Thus, the vectors are scaled by $a^*$ in Step $2$ in order to meet the constraints and, at the same time, to find a better solution than $Q^*\xi+q^*$. This can be viewed as a line search.

It is possible that a constant $a^*$ in Step $2$ does not exist for some $\xi$. For example, consider the optimization problem in \eqref{eq:InputModel} with a choice of \eqref{eq:Discrimination} and its corresponding SDP in \eqref{eq:InputModelSDP} for Algorithm \ref{alg:fea}. If, for a vector $\xi$ generated in Step $1$, there exists an $m\in\{1,\cdots,M\}$ such that $\overline G_mQ^*\xi=0$ and $\frac{1}{\bar\gamma_m^2}\left\|\overline G_mq^*+\overline\eta_m\right\|^2<1$, then the scaling scheme does not produce a feasible solution and, hence, Algorithm \ref{alg:fea} returns back to Step $1$ and generate another vector. However, since $\overline G_m$'s and $Q^*$ are not zero matrices and a random vector $\xi$ has a continuous PDF, we have $P\left[\overline G_mQ^*\xi\right]=0$, which means that there exists $a^*$ in Step $2$, with probability $1$. Thus, every iteration of Algorithm \ref{alg:fea} produces, with probability $1$, a feasible solution to the original optimization problems.

When Algorithm \ref{alg:fea} is performed on an optimization problem whose optimal value is approximately known, we can modify Algorithm \ref{alg:fea} to be terminated in Step $3$ if the current vector $a^*Q^*\xi+q^*$ produces an objective which is sufficiently accurate. In the next section, we construct such an approximation of the optimal value.

\subsection{Quality of the suboptimal solutions}

Although the SDPs and Algorithm \ref{alg:fea} can provide fairly good solutions with high probability, the optimal values of the original optimization problems are unknown. Instead, in this section, we attain, for some optimization problems in \eqref{eq:InputModel} and \eqref{eq:ModelInput}, regions where the optimal values reside in. These regions can provide some ideas about the accuracy of the computations via SDR.

We consider the optimization problem in \eqref{eq:InputModel} and its SDP in \eqref{eq:InputModelSDP} in Lemma \ref{lem:InputModel} below. This lemma is an extension from Theorem 1 in \cite{Luo:2007} and, thus, the proof of the lemma follows, in general, the proof of the theorem.

\begin{lem}\label{lem:InputModel}
Let $\mathbf{u}^*$ and $U^*=\begin{bmatrix}Q^*Q^*{}'+q^*q^*{}' &q^*\\ q^*{}' &1\end{bmatrix}$ be the optimal solutions of the optimization problems in \eqref{eq:InputModel} with choices of $V_\infty(\mathbf{u})$ in \eqref{eq:Discrimination} and $Z(\mathbf{u})$ in \eqref{eq:InputPower} or \eqref{eq:InputAmplitude} and its corresponding SDP in \eqref{eq:InputModelSDP}, respectively. Then, we have
\begin{equation}\nonumber
\hat Z(U^*)\le Z(\mathbf{u}^*)\le\sqrt{\frac{27}{\pi\rho}}(M+1)E\left[Z(Q^*\xi)^2\right]^\frac{1}{2}+Z(q^*)^\frac{1}{2}
\end{equation}
where $\xi\in\mathbb{R}^T$ is a random vector with a standard normal distribution and
\begin{equation}\nonumber
\rho=\min_{m\in\{1,\cdots,M\}}\sum_{i=1}^T\lambda_{m,i}^2
\end{equation}
where $\lambda_{m,i}$, $i=1,\cdots,T$, are the singular values of $\frac{1}{\overline\gamma_m}\overline G_mQ^*$.
\end{lem}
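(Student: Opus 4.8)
The plan is to prove the two inequalities separately: the left one is essentially free, and the right one follows the randomized-rounding argument behind Theorem~1 of \cite{Luo:2007}, adapted to the affine offsets $\overline\eta_m$, $q^*$ present here. For $\hat Z(U^*)\le Z(\mathbf{u}^*)$ I would take the assumed optimal $\mathbf{u}^*$ of \eqref{eq:InputModel} and form the rank-one matrix $\begin{bmatrix}\mathbf{u}^*\\ 1\end{bmatrix}\begin{bmatrix}\mathbf{u}^*\\ 1\end{bmatrix}'$; it is positive semidefinite, its $(T+1,T+1)$ entry is $1$, and by the homogenisation identities \eqref{eq:homogenization1}--\eqref{eq:homogenization2} it satisfies the discrimination constraint $\hat V_\infty\ge 1$ with relaxed cost $\hat Z$ equal to $Z(\mathbf{u}^*)$. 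Hence it is feasible for \eqref{eq:InputModelSDP}, so the SDP optimum cannot exceed $Z(\mathbf{u}^*)$.

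For the right inequality I would use the decomposition $U^*=\begin{bmatrix}Q^*Q^*{}'+q^*q^*{}' &q^*\\ q^*{}' &1\end{bmatrix}$, draw $\xi\in\mathbb{R}^T$ from a standard normal distribution, and study the random candidate $\mathbf{u}_\xi=Q^*\xi+q^*$ --- precisely the object that makes \eqref{eq:InputModelSDP} the ``stochastic version'' of \eqref{eq:InputModel} identified in Section~\ref{sec:quality}. Two identities anchor the construction. First, for every pair index $m$,
\begin{equation}\nonumber
E\left[\frac{1}{\overline\gamma_m^2}\left|\overline G_m(Q^*\xi+q^*)+\overline\eta_m\right|^2\right]=\frac{1}{\overline\gamma_m^2}\tr\left(\begin{bmatrix}\overline G_m{}'\\ \overline\eta_m{}'\end{bmatrix}\begin{bmatrix}\overline G_m &\overline\eta_m\end{bmatrix}U^*\right)\ge 1,
\end{equation}
because $U^*$ is feasible for \eqref{eq:InputModelSDP}: the random candidate meets each discrimination constraint ``on average''. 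Second, $E[\mathbf{u}_\xi\mathbf{u}_\xi{}']=Q^*Q^*{}'+q^*q^*{}'$ and $E[\mathbf{u}_\xi]=q^*$, so the expected rank-one lift of $\mathbf{u}_\xi$ is $U^*$ and its expected signal cost is $\hat Z(U^*)$.

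A fixed realization of $\mathbf{u}_\xi$ need not satisfy the $M$ discrimination constraints, so I would rescale it exactly as in Step~2 of Algorithm~\ref{alg:fea}: put $\tilde{\mathbf{u}}_\xi=a^*(\xi)Q^*\xi+q^*$ with $a^*(\xi)$ the smallest scaling for which all $M$ constraints hold. As already argued for Algorithm~\ref{alg:fea}, $a^*(\xi)$ is finite almost surely (since $\overline G_mQ^*\ne 0$ and $\xi$ has a density, $\overline G_mQ^*\xi\ne 0$ a.s.), and constraint $m$ is in fact met once $|a|\ge(\overline\gamma_m+|\overline G_mq^*+\overline\eta_m|)/|\overline G_mQ^*\xi|$, so $|a^*(\xi)|\le\max_m(\overline\gamma_m+|\overline G_mq^*+\overline\eta_m|)/|\overline G_mQ^*\xi|$. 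Writing $\overline G_mQ^*$ in its singular value decomposition turns $|\overline G_mQ^*\xi|^2$ into $\overline\gamma_m^2\sum_{i=1}^T\lambda_{m,i}^2\zeta_i^2$ with $\zeta_i$ iid standard normal, and $\rho=\min_m\sum_i\lambda_{m,i}^2$ is exactly what must survive the worst-case analysis of this chi-squared-like denominator. Because $\tilde{\mathbf{u}}_\xi$ leaves $q^*$ unscaled, a triangle/parallelogram bound on $Z$ (where $Z_{i2}$ is a squared norm and $Z_{i\infty}$ a norm) splits $Z(\tilde{\mathbf{u}}_\xi)$ into a part controlled by $a^*(\xi)Q^*\xi$ and an additive correction built from $q^*$. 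Taking expectations, applying Cauchy--Schwarz to separate $E[Z(Q^*\xi)^2]^{1/2}$ from the moments of $a^*(\xi)$, passing from $\max_m$ to $\sum_m$ over the $M$ constraints, and evaluating the low-dimensional Gaussian integrals that produce the constant $\sqrt{27/\pi}$ would yield a bound of the stated form; since $\tilde{\mathbf{u}}_\xi$ is feasible for almost every $\xi$ and $\mathbf{u}^*$ is optimal, $Z(\mathbf{u}^*)\le Z(\tilde{\mathbf{u}}_\xi)$ a.s., and the infimum over $\xi$ is dominated by that bound.

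The hard part will be the control of the rescaling factor. Bounding $E[\,|a^*(\xi)|^p\,Z(Q^*\xi)^q\,]$ needs the negative moments $E[(\sum_i\lambda_{m,i}^2\zeta_i^2)^{-p/2}]$ to be finite, which fails when $\rank(\overline G_mQ^*)$ is small --- for instance $E[(\chi^2_d)^{-1}]$ diverges for $d\le 2$ and $E[(\chi^2_d)^{-2}]$ for $d\le 4$. As in \cite{Luo:2007} this would be handled by a truncated or conditional scaling, or by a tail-probability union bound used in place of the raw negative moment; carrying this through with the offsets $\overline\eta_m$, $q^*$, for both choices of $Z$, and tracking the precise constant $\sqrt{27/(\pi\rho)}$, the linear factor $M+1$, and the exact form of the $q^*$-correction $Z(q^*)^{1/2}$, is where essentially all the effort lies. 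A lesser nuisance is checking that the triangle/parallelogram step on $Z$ works uniformly over $Z_{i2}$ and $Z_{i\infty}$ with the powers that appear in the statement.
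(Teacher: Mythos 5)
Your overall strategy is the right one --- lower bound by relaxation, then randomize $\mathbf{u}_\xi=Q^*\xi+q^*$ from the ``stochastic'' interpretation of \eqref{eq:InputModelSDP}, rescale to feasibility, and control the scaling through the singular values of $\frac{1}{\overline\gamma_m}\overline G_mQ^*$ --- but your primary line of attack has a genuine gap exactly where you suspect it does, and the fallback you gesture at is in fact the entire proof. The paper does \emph{not} take expectations of $Z(\tilde{\mathbf{u}}_\xi)$ and does not need any Cauchy--Schwarz separation of $E[|a^*(\xi)|^p Z(Q^*\xi)^q]$; as you note, the required negative moments of $|\overline G_mQ^*\xi|$ can diverge when $\rank(\overline G_mQ^*)$ is small, so that route does not close. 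Instead the paper runs a pure ``positive probability of a good event'' argument: it lower-bounds
\begin{equation}\nonumber
P\left[Z(a^*Q^*\xi+q^*)\le\beta_1\beta_2E\left[Z(Q^*\xi)^2\right]^{\frac{1}{2}}+Z(q^*)\right]\ge 1-P\left[a^*>\beta_1\right]-P\left[Z(Q^*\xi)^2>\beta_2^2E\left[Z(Q^*\xi)^2\right]\right],
\end{equation}
kills the second term with Markov's inequality ($\le 1/\beta_2^2$), and bounds the first via Proposition~\ref{prop:alpha}: a union bound over the $M$ constraints, an SVD reduction to $P[|\beta_1\Lambda_m\tilde\xi_m|<1]$, and a case split at $\theta=2/\pi$ between ``one singular value carries a $\theta$-fraction of $\sum_i\lambda_{m,i}^2$'' (one-coordinate Gaussian anti-concentration, giving the $1/(\beta_1\sqrt{\rho})$ term) and the complementary case (two-coordinate anti-concentration, giving the $(\rank(U^*)-1)/\beta_1^2\rho$ term). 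Choosing $\beta_2=\sqrt{3}$ and $\beta_1=3(M+1)/\sqrt{\pi\rho}$ makes the total failure probability strictly less than $1$, so a realization $\hat\xi$ achieving the bound exists, and optimality of $\mathbf{u}^*$ finishes the argument. None of this quantitative work appears in your proposal beyond the phrase ``a tail-probability union bound used in place of the raw negative moment.''

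A second, independent omission: the constant $\sqrt{27/(\pi\rho)}\,(M+1)$ is not produced by the union bound over $m$ alone. The two-coordinate branch of Proposition~\ref{prop:alpha} carries a factor $\rank(U^*)-1$, and to make that branch dominated by the one-coordinate branch at the chosen $\beta_1$ the paper invokes Propositions~\ref{prop:RankReductionL1} and~\ref{prop:RankReduction} to assert $\rank(U^*)\le\sqrt{2(M+1)}$; i.e., $U^*$ must be taken to be the low-rank optimal solution those propositions guarantee. Your proposal never uses the rank of $U^*$, so even after repairing the moment issue you could not recover the stated constant. (Your treatment of the left inequality and your bound $|a^*(\xi)|\le\max_m(\overline\gamma_m+|\overline G_mq^*+\overline\eta_m|)/|\overline G_mQ^*\xi|$ are both fine, and your worry about how $Z_{i2}$ versus $Z_{i\infty}$ interacts with the scaling step is legitimate --- the paper itself glosses over that point.)
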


\

Note that, with the choice of $Z(\mathbf{u})=Z_{i2}(\mathbf{u})=\frac{1}{\bar u^2}\left|\mathbf{u}\right|^2$, we have $E\left[Z(Q^*\xi)^2\right]^\frac{1}{2}=\frac{1}{\bar u}{\rm{Tr}}(Q^*Q^*{}')^\frac{1}{2}$ and $Z(q^*)=\frac{1}{\bar u^2}\left|q^*\right|^2$. Further, if $\overline\eta_m=\mathbf{0}$ $\forall m\in\{1,\cdots,M\}$, then Proposition \ref{prop:rho} below shows that $q^*=\mathbf{0}$ and $\rho\ge 1$, from which it follows that $E\left[Z(Q^*\xi)^2\right]^\frac{1}{2}=\hat Z(U^*)$ and $Z(q^*)=0$ and we have
\begin{equation}\nonumber
\hat Z(U^*)^2\le Z(\mathbf{u}^*)^2\le\frac{27M^2}{\pi}\hat Z(U^*)^2
\end{equation}
where $M+1$ is replaced with $M$ since there is no additional constraint from homogenization (see Theorem $1$ in \cite{Luo:2007}).

The upper bound in Lemma \ref{lem:InputModel} depends on $\rho$, the least sum of the squares of the singular values of  $\frac{1}{\overline\gamma_m}\overline G_mQ^*$. A larger value for $\rho$ is preferable for a tighter bound and the following proposition provides its property.

\begin{prop}\label{prop:rho}
Let $\mathbf{u}^*$ and $U^*=\begin{bmatrix}Q^*Q^*{}'+q^*q^*{}' &q^*\\ q^*{}' &1\end{bmatrix}$ be optimal solutions to the optimization problem in \eqref{eq:InputModel} with choices of $V(\mathbf{u})$ in \eqref{eq:Discrimination} and $Z(\mathbf{u})$ in \eqref{eq:InputPower} or \eqref{eq:InputAmplitude} and the SDP in \eqref{eq:InputModelSDP}, respectively. The constant $\rho$ in Lemma \ref{lem:InputModel} satisfies
\begin{equation}\nonumber
\rho\ge 1-\max_{m\in\{1,\cdots,M\}}\frac{1}{\overline\gamma_m^2}\left|\overline G_mq^*+\overline\eta_m\right|^2.
\end{equation}
Furthermore, in the case that $\overline\eta_m=\mathbf{0}$ $\forall m\in\{1,\cdots,M\}$, either we have $q^*=\mathbf{0}$ or a matrix $\begin{bmatrix}Q^*Q^*{}' &\mathbf{0}\\ \mathbf{0}' &1\end{bmatrix}$ is also an optimal solution to the SDP in \eqref{eq:InputModelSDP}, which leads to
$\rho\ge 1.
$
\end{prop}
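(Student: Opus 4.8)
The plan is to exploit the constraint structure of \eqref{eq:InputModel} evaluated at the optimal SDP solution, together with the simple identity $\tilde\Sigma$-free algebra $\tfrac{1}{\overline\gamma_m^2}\,\mathrm{Tr}\!\left(\begin{bmatrix}\overline G_m{}'\\\overline\eta_m{}'\end{bmatrix}\begin{bmatrix}\overline G_m &\overline\eta_m\end{bmatrix}U^*\right)=\tfrac{1}{\overline\gamma_m^2}\left(\mathrm{Tr}(\overline G_mQ^*Q^*{}'\overline G_m{}')+\left|\overline G_mq^*+\overline\eta_m\right|^2\right)$. Since $U^*$ is feasible for \eqref{eq:InputModelSDP}, this expression is $\ge 1$ for every $m$, so $\tfrac{1}{\overline\gamma_m^2}\mathrm{Tr}(\overline G_mQ^*Q^*{}'\overline G_m{}')\ge 1-\tfrac{1}{\overline\gamma_m^2}\left|\overline G_mq^*+\overline\eta_m\right|^2$. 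The left-hand side is precisely $\sum_{i=1}^{T}\lambda_{m,i}^2$, the sum of squared singular values of $\tfrac{1}{\overline\gamma_m}\overline G_mQ^*$ (the Frobenius norm squared). Taking the minimum over $m$ on both sides and noting that $\min_m (1-c_m)\ge 1-\max_m c_m$ gives the first claimed inequality $\rho\ge 1-\max_m\tfrac{1}{\overline\gamma_m^2}\left|\overline G_mq^*+\overline\eta_m\right|^2$.

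For the second part, assume $\overline\eta_m=\mathbf 0$ for all $m$. The idea is to show that shifting the ``$q^*q^*{}'$'' mass into the $Q^*Q^*{}'$ block (i.e. replacing $U^*$ by $\tilde U=\begin{bmatrix}Q^*Q^*{}'&\mathbf 0\\\mathbf 0'&1\end{bmatrix}$, which amounts to absorbing $q^*$ as an extra column of $Q^*$) cannot increase the objective and cannot destroy feasibility. For feasibility: $\tfrac{1}{\overline\gamma_m^2}\mathrm{Tr}(\overline G_m(Q^*Q^*{}'+q^*q^*{}')\overline G_m{}')\ge\tfrac{1}{\overline\gamma_m^2}\mathrm{Tr}(\overline G_mQ^*Q^*{}'\overline G_m{}')$ would go the wrong way, so instead one keeps $q^*q^*{}'$ in the $(Q)$-block, i.e. define $\tilde Q=\begin{bmatrix}Q^* & q^*\end{bmatrix}$ and $\tilde q=\mathbf 0$, giving $\tilde U=\begin{bmatrix}\tilde Q\tilde Q'&\mathbf 0\\\mathbf 0'&1\end{bmatrix}=\begin{bmatrix}Q^*Q^*{}'+q^*q^*{}'&\mathbf 0\\\mathbf 0'&1\end{bmatrix}$. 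Because $\overline\eta_m=\mathbf 0$, the discrimination quantity $\tfrac{1}{\overline\gamma_m^2}\mathrm{Tr}(\overline G_m\cdot\overline G_m{}'\,\cdot)$ depends on $U^*$ only through its upper-left $T\times T$ block, which is unchanged; so $\tilde U$ satisfies the same constraints. For the objective: one checks case by case that $\hat Z_{i2}$ and $\hat Z_{i\infty}$ evaluated at $\tilde U$ are no larger than at $U^*$ — for $Z_{i2}$ the trace of the upper-left block is identical, and for $Z_{i\infty}$ the relevant diagonal entries are identical — hence $\tilde U$ is also optimal. Now $\tilde U$ is a valid optimal solution with ``$q$-part'' equal to zero; applying the first inequality to $\tilde U$ (whose $q$ is $\mathbf 0$) yields $\rho\ge 1-\max_m\tfrac{1}{\overline\gamma_m^2}\left|\mathbf 0+\mathbf 0\right|^2=1$. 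The dichotomy in the statement is then exactly ``either the original $q^*=\mathbf 0$ (and we are done directly), or $\tilde U$ above is another optimal solution, for which $\rho\ge 1$.''

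The main obstacle I anticipate is the bookkeeping in the second part: one must be careful that the bound $\rho\ge 1-\max_m(\cdots)$ in Lemma \ref{lem:InputModel} is stated in terms of the singular values of $\tfrac{1}{\overline\gamma_m}\overline G_mQ^*$ for \emph{the particular} optimal pair $(\mathbf u^*,U^*)$, so that passing to the modified optimal solution $\tilde U$ requires re-reading $\rho$ as being computed with $\tilde Q=\begin{bmatrix}Q^*&q^*\end{bmatrix}$ in place of $Q^*$; the whole argument only shows that \emph{some} optimal solution achieves $\rho\ge 1$, which is precisely what the proposition claims. A secondary technical point is verifying that $Z(\mathbf u)$ of type \eqref{eq:InputPower} or \eqref{eq:InputAmplitude} indeed depends on the homogenised matrix only through blocks that are preserved under the shift $q^*\mapsto$ extra column of $Q^*$ — this is immediate from the homogenisation formulas \eqref{eq:homogenization2}, since those cost matrices have a zero last row and column.
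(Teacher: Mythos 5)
Your proof of the first inequality is exactly the paper's: expand the homogenised constraint at $U^*$, identify $\frac{1}{\overline\gamma_m^2}{\rm{Tr}}\left(\overline G_mQ^*(\overline G_mQ^*)'\right)$ with $\sum_{i=1}^T\lambda_{m,i}^2$, and invoke feasibility of $U^*$. The paper's appendix proof of Proposition~\ref{prop:rho} in fact consists of only this step and never addresses the ``furthermore'' clause, so your second paragraph supplies an argument the paper omits. That argument is sound: when $\overline\eta_m=\mathbf{0}$, every discrimination constraint and both admissible objectives $\hat Z_{i2}$, $\hat Z_{i\infty}$ depend on $U$ only through the upper-left $T\times T$ block (indeed only through its trace or its diagonal), so zeroing the off-diagonal block preserves feasibility and the optimal value; if you want an explicit justification that the zeroed matrix stays in $S_+^{T+1}$ and optimal, note that $q^*\mapsto -q^*$ yields another optimal point and take the midpoint. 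Two caveats are worth recording. First, the matrix you produce is $\begin{bmatrix}Q^*Q^*{}'+q^*q^*{}' &\mathbf{0}\\ \mathbf{0}' &1\end{bmatrix}$, not the $\begin{bmatrix}Q^*Q^*{}' &\mathbf{0}\\ \mathbf{0}' &1\end{bmatrix}$ printed in the statement; the printed matrix cannot in general be optimal when $q^*\ne\mathbf{0}$ (for $Z_{i2}$ it would, if feasible, have objective strictly below the optimal value), so your reading --- that \emph{some} optimal solution has zero $q$-part and therefore $\rho\ge 1$ when $\rho$ is computed from that solution's factor --- is the version that is actually provable and the one the remark following Lemma~\ref{lem:InputModel} relies on; you flag this correctly. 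Second, a minor bookkeeping point: to keep $\xi\in\mathbb{R}^T$ as in Lemma~\ref{lem:InputModel}, take $\tilde Q$ to be a $T\times T$ square root of $Q^*Q^*{}'+q^*q^*{}'$ rather than the $T\times(T+1)$ matrix $\begin{bmatrix}Q^* &q^*\end{bmatrix}$; the Frobenius-norm computation is unaffected.
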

\ 

\section{AN EXAMPLE}\label{sec:example}

In this section, the input signal design algorithm is applied to a fault detection problem for wind turbines.

A pitch angle $y$ of a blade of a wind turbine is the angle between the rotor plane and the blade chord line and, thus, a pitch angle $y=0^\circ$ means that the blade is aligned in parallel with the rotor plane. The blade is rotated by a hydraulic system and a popular model of this actuator is a closed-loop transfer function between the pitch angle $y$ and a reference angle $u$
\begin{equation}\nonumber
\frac{\omega^2}{s^2+2\zeta\omega s+\omega^2}
\end{equation}
where $\zeta$ and $\omega$ are the damping ratio and the natural frequency, respectively. See, for example, \cite{Odgaard:2013} for the details. In a normal condition, the parameters are $\zeta_1=0.6$ and $\omega_1=11.11$.

There are two major faults that can happen in the pitch angle control and we consider only one of them that is caused by an abrupt drop of the hydraulic pressure. In this case, the parameters change to $\zeta_2=0.45$ and $\omega_2=5.73$.

In order to detect the fault, i.e. to distinguish between two models based on their input-output signal, we first discretize two models of the actuators to obtain $\mathcal{G}_1$ and $\mathcal{G}_2$ corresponding to $(\zeta_1,\omega_1)$ and $(\zeta_2,\omega_2)$, respectively. The discretization is performed with a sampling time $0.01s$ and a zero-order hold. In order to complete the model structures as in \eqref{eq:Models}, we use identity operators for both $\mathcal{H}_1$ and $\mathcal{H}_2$ and the initial conditions, the current pitch angle and pitch angular velocity, of both models have mean vectors $\bar x_1=\bar x_2=\begin{bmatrix}0.5^\circ &0^\circ/s\end{bmatrix}'$ and $Q_1=Q_2=I$ for their covariances. And, we assume that the values $\sigma_1$ and $\sigma_2$ are less than or equal to $\bar\sigma=\sqrt{2}$.

Then, we search for an input signal, for a time horizon $T=100$, using an optimization problem in \eqref{eq:ModelInput} with choices of $V_\infty(\mathbf{u})$ in \eqref{eq:Discrimination} and $Z_{i2}(\mathbf{u})$ in \eqref{eq:InputPower} combined with $\bar u=1.5$, i.e. we maximize the level of model discrimination while keeping the power of the input signal below $\bar u$. As guaranteed by Remark \ref{rem:Rank1}, its corresponding SDP in \eqref{eq:ModelInputSDP} has an optimal solution of rank $1$. Thus, this SDP is solved by the CVX followed by a rank reduction procedure. The designed input signal is shown in Fig. \ref{fig:Designed} (Top).
\begin{figure}[t]
\centering
\includegraphics[width=\columnwidth]{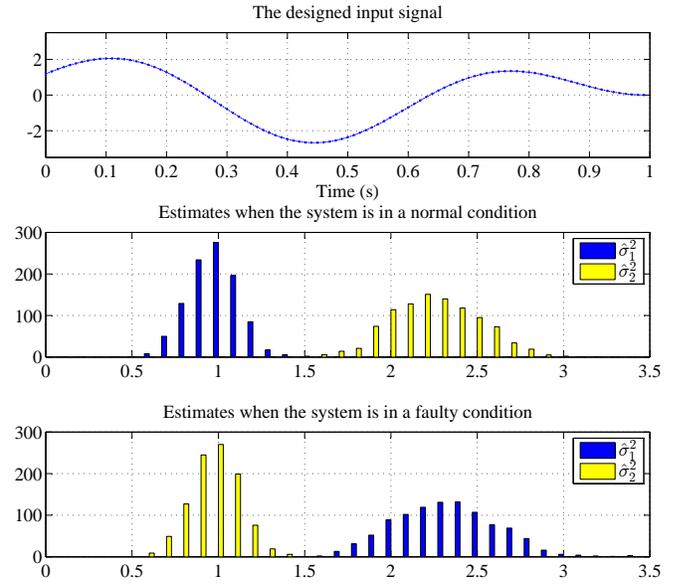}
\caption{The designed input signal (Top) and the corresponding empirical PDFs of $\hat\sigma_1^2$ and $\hat\sigma_2^2$ when the system is in the normal condition (Middle) and in the faulty condition (Bottom).}
\label{fig:Designed}
\end{figure}

First, we apply the designed input signal to the first model $\mathcal{G}_1$, which corresponds to the normal condition, with  $\sigma_1=1$ from time $0$ to $T=100$ and, then, compute estimates $\hat\sigma_1^2$ and $\hat\sigma_2^2$. This simulation is repeated $1000$ times to obtain empirical PDFs of $\hat\sigma_1^2$ and $\hat\sigma_2^2$, which are shown in Fig. \ref{fig:Designed} (Middle). As shown in the figure, the second model, which corresponds to the faulty condition, produces greater values for the estimate with high probability and, thus, the model discrimination method in Section \ref{sec:ModelDiscrimination} selects the first model, which is the correct model, with high probability. In Fig. \ref{fig:Designed} (Bottom), empirical PDFs of $\hat\sigma_1^2$ and $\hat\sigma_2^2$ are shown when the system is in the faulty condition, from which it is evident that the model discrimination method selects the correct model, which is the second model, with high probability.

For comparison, we employ a step input signal with $\bar u$ as its amplitude, shown in Fig. \ref{fig:Another} (Top), for the same simulation and obtain empirical PDFs of $\hat\sigma_1^2$ and $\hat\sigma_2^2$ in Fig. \ref{fig:Another} (Middle) and (Bottom) when the system is in the normal condition and the faulty condition, respectively.
\begin{figure}[t]
\centering
\includegraphics[width=\columnwidth]{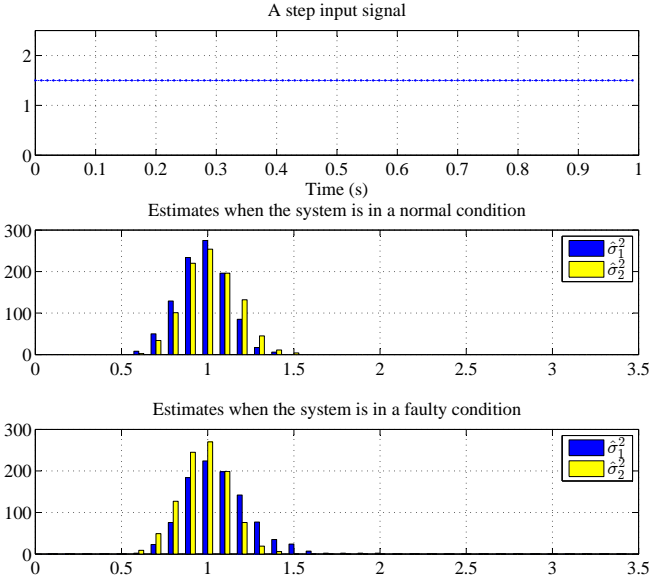}
\caption{A step input signal (Top) and the corresponding empirical PDFs of $\hat\sigma_1^2$ and $\hat\sigma_2^2$ when the system is in the normal condition (Middle) and in the faulty condition (Bottom).}
\label{fig:Another}
\end{figure}
This input signal is common in practice but, as can be seen from the figures, model discrimination is impossible.

\section{CONCLUSION}\label{sec:conclusion}

In this paper, we give a procedure for the design of probing input signals for model discrimination and fault detection over finite time intervals. The design method uses a likelihood-based model selection criteria, which we obtain from a modification of PEM to accommodate probabilistic structures of initial conditions of models. From this, we obtain conditions on input signals that guarantee that the hypothesis testing distinguishes models from each other with a given level of confidence.

From this general setting, several specific optimization problems are constructed from different constraints on the input, and different criteria for model discrimination. These optimization problems are nonconvex, and difficult to solve in general, so we suggest a solution procedure based on semidefinite relaxation and sampling.

The quality of this relaxation scheme is assessed based on some known results from duality of nonconvex quadratic programming and randomization algorithms. The utility of the method is assessed with an example of fault detection in a wind turbine.

An interesting future application will be to combine this design scheme with some preexisting control laws to form a so-called dual control, that optimizes both model discrimination ability and some other control objectives. This may be useful for switching adaptive control. It is straightforward to include in our proposed method linear and quadratic costs, so a natural controller to combine with would be MPC.



\section*{APPENDIX}

\subsection{Proof of Theorem \ref{prop:ID}}

Since $p_{n_1}$ is a Gaussian random vector with a zero mean vector and a covariance $\sigma_{n_1}^2I$, we have, from the definition of the value $\chi_{1-\alpha,T_{n_1}-1}$,
\begin{equation}\nonumber
P\left[\frac{|p_{n_1}|^2}{\sigma_{n_1}^2}\le\chi_{1-\alpha,T_{n_1}-1}^2\right]=1-\alpha,
\end{equation}
from which, together with \eqref{eq:SigmaHatn} and the fact that $|\tilde p_{n_1}|=\left|\begin{bmatrix}\Psi_{n_1}Q_{n_1} &H_{n_1}\end{bmatrix}^+\begin{bmatrix}\Psi_{n_1}Q_{n_1} &H_{n_1}\end{bmatrix}p_{n_1}\right|\le |p_{n_1}|$ and $\sigma_{n_1}\le\bar\sigma$, it follows that
\begin{equation}\label{eq:SigmaHatn1}
P\left[\frac{T_{n_1}\hat\sigma_{n_1}^2}{\bar\sigma^2}\le\chi_{1-\alpha,T_{n_1}-1}^2\right]\ge 1-\alpha.
\end{equation}

Pick any $n_2\in\left\{1,\cdots,N\right\}\setminus\{n_1\}$. (i) In the case that the condition in \eqref{eq:MuTildeCondition1} holds, it follows, from \eqref{eq:ptilden1n2}, that
\begin{equation}\nonumber
\begin{split}
&\left|\tilde p_{n_2}\right|\ge\left|\tilde\mu_{n_2n_1}\right|-\left|\tilde\Sigma_{n_2n_1}\tilde p_{n_1}\right|\\
&>\chi_{1-\alpha,T_{n_2}-1}\bar\sigma+\left(\frac{\chi_{1-\alpha,T_{n_1}-1}\bar\sigma}{\left|\tilde p_{n_1}\right|}-1\right)\left|\tilde\Sigma_{n_2n_1}\tilde p_{n_1}\right|
\end{split}
\end{equation}
and, thus, we have $\left|\tilde p_{n_2}\right|>\chi_{1-\alpha,T_{n_2}-1}\bar\sigma$ when $\left|\tilde p_{n_1}\right|\le\chi_{1-\alpha,T_{n_1}-1}\bar\sigma$, which implies that
\begin{equation}\label{eq:Pn2n1}
P\left[\frac{T_{n_2}\hat\sigma_{n_2}^2}{\bar\sigma^2}>\chi_{1-\alpha,T_{n_2}-1}^2\left|\frac{T_{n_1}\hat\sigma_{n_1}^2}{\bar\sigma^2}\le\chi_{1-\alpha,T_{n_1}-1}^2\right.\right]=1.
\end{equation}
(ii) In the other case that the condition in \eqref{eq:MuTildeCondition2} holds, we obtain, using \eqref{eq:ptilden1n2},
\begin{equation}\nonumber
\begin{split}
&\left|\tilde p_{n_1}\right|\ge\left|\tilde\mu_{n_1n_2}\right|-\left|\tilde\Sigma_{n_1n_2}\tilde p_{n_2}\right|\\
&>\chi_{1-\alpha,T_{n_1}-1}\bar\sigma+\left(\frac{\chi_{1-\alpha,T_{n_2}-1}\bar\sigma}{\left|\tilde p_{n_2}\right|}-1\right)\left|\tilde\Sigma_{n_1n_2}\tilde p_{n_2}\right|
\end{split}
\end{equation}
and, thus, we have $\left|\tilde p_{n_2}\right|>\chi_{1-\alpha,T_{n_2}-1}\bar\sigma$ if $\left|\tilde p_{n_1}\right|\le\chi_{1-\alpha,T_{n_1}-1}\bar\sigma$. This also leads to \eqref{eq:Pn2n1}, which implies that either of conditions \eqref{eq:MuTildeCondition1} and \eqref{eq:MuTildeCondition2} guarantees \eqref{eq:Pn2n1}.

Since $n_2$ is arbitrarily selected, we have
\begin{equation}\nonumber
P\left[\frac{T_{n_2}\hat\sigma_{n_2}^2}{\bar\sigma^2}>\chi_{1-\alpha,T_{n_2}-1}^2\left|\frac{T_{n_1}\hat\sigma_{n_1}^2}{\bar\sigma^2}\le\chi_{1-\alpha,T_{n_1}-1}^2\right.\right]=1
\end{equation}
$\forall n_2\in\left\{1,\cdots,N\right\}\setminus\{n_1\}$ and, thus,
\begin{equation}\nonumber
\begin{split}
&P\left[\frac{T_{n_2}\hat\sigma_{n_2}^2}{\bar\sigma^2}>\chi_{1-\alpha,T_{n_2}-1}^2\ \ \forall n_2\in\left\{1,\cdots,N\right\}\setminus\{n_1\}\right.\\
&\hspace{4mm}\left.\left|\frac{T_{n_1}\hat\sigma_{n_1}^2}{\bar\sigma^2}\le\chi_{1-\alpha,T_{n_1}-1}^2\right.\right]=1,
\end{split}
\end{equation}
from which, together with \eqref{eq:SigmaHatn1}, it follows that
\begin{equation}\nonumber
\begin{split}
&P\left[\frac{T_{n_2}\hat\sigma_{n_2}^2}{\bar\sigma^2}>\chi_{1-\alpha,T_{n_2}-1}^2\ \ \forall n_2\in\left\{1,\cdots,N\right\}\setminus\{n_1\}\right.\\
&\hspace{5mm}\left.{\rm{and}}\ \frac{T_{n_1}\hat\sigma_{n_1}^2}{\bar\sigma^2}\le\chi_{1-\alpha,T_{n_1}-1}^2\right]\\
&=P\left[\frac{T_{n_2}\hat\sigma_{n_2}^2}{\bar\sigma^2}>\chi_{1-\alpha,T_{n_2}-1}^2\ \ \forall n_2\in\left\{1,\cdots,N\right\}\setminus\{n_1\}\right.\\
&\hspace{8mm}\left.\left|\frac{T_{n_1}\hat\sigma_{n_1}^2}{\bar\sigma^2}\le\chi_{1-\alpha,T_{n_1}-1}^2\right.\right]P\left[\frac{T_{n_1}\hat\sigma_{n_1}^2}{\bar\sigma^2}\le\chi_{1-\alpha,T_{n_1}-1}^2\right]\\
&\ge 1-\alpha,
\end{split}
\end{equation}
which completes the proof.

\subsection{Proof of Proposition \ref{prop:RankReduction}}

This proposition is proved by constructing, from $U_1^*$, an optimal solution $U_2^*$ satisfying \eqref{eq:RankBound}. In the case that $\frac{{\rm{rank}}(U_1^*)\left({\rm{rank}}(U_1^*)+1\right)}{2}\le K+2$, we have $U_2^*=U_1^*$. Thus, for the remaining of the proof, we suppose that
\begin{equation}\label{eq:RankU1Star}
\frac{{\rm{rank}}(U_1^*)\left({\rm{rank}}(U_1^*)+1\right)}{2}>K+2.
\end{equation}

Since $U_1^*\ge 0$ is a symmetric matrix, we can find a unitary matrix $W$ and a diagonal matrix $\Lambda={\rm{diag}}\{\lambda_1,\cdots,\lambda_{{\rm{rank}}(U_1^*)}\}$ with $\lambda_1\ge\cdots\ge\lambda_{{\rm{rank}}(U_1^*)}>0$ satisfying $U_1^*=W\begin{bmatrix}\Lambda &\mathbf{0}\\ \mathbf{0} &\mathbf{0}\end{bmatrix}W'$.
 Then, it is clear, from \eqref{eq:RankU1Star}, that there exists a rank$(U_1^*)\times$rank$(U_1^*)$ nonzero symmetric matrix $\breve U$ satisfying
\begin{equation}\label{eq:Null}
\begin{split}
{\rm{Tr}}(P_{\ell}\breve U_1)&=0,\ \ell=\ell^*\\
{\rm{Tr}}(R_k\breve U_1)&=0,\ k=1,\cdots,K\\
{\rm{Tr}}(e_{T+1}e_{T+1}{}'\breve U_1)&=0
\end{split}
\end{equation}
where $\ell^*=\arg\max_{\ell\in\{1,\cdots,L\}}{\rm{Tr}}(P_\ell U_1^*)$ and
\begin{equation}\nonumber
\begin{split}
\breve U_1=W\begin{bmatrix}\Lambda &\mathbf{0}\\ \mathbf{0} &I\end{bmatrix}^\frac{1}{2}\begin{bmatrix}\breve U &\mathbf{0}\\ \mathbf{0} &\mathbf{0}\end{bmatrix}\begin{bmatrix}\Lambda &\mathbf{0}\\ \mathbf{0} &I\end{bmatrix}^\frac{1}{2}W'.
\end{split}
\end{equation}
And, further, we can find a unitary matrix $\breve W$ and a diagonal matrix $\breve\Lambda={\rm{diag}}\{\breve\lambda_1,\cdots,\breve\lambda_{{\rm{rank}}(\breve U)}\}$ with $\breve\lambda_1\ge\cdots\ge\breve\lambda_{{\rm{rank}}(\breve U)}$ satisfying $\breve U=\breve W\begin{bmatrix}\breve\Lambda &\mathbf{0}\\ \mathbf{0} &\mathbf{0}\end{bmatrix}\breve W'$. Notice that $1\le$ rank$(\breve U)=$ rank$(\breve\Lambda)\le$ rank$(U_1^*)$.

 Then, for any constant $a$, we have
\begin{equation}\nonumber
\begin{split}
&U_2(a)\triangleq U_1^*+a\breve U_1\\
&=W\begin{bmatrix}\Lambda &\mathbf{0}\\ \mathbf{0} &I\end{bmatrix}^\frac{1}{2}\begin{bmatrix}I+a\breve U &\mathbf{0}\\ \mathbf{0} &\mathbf{0}\end{bmatrix}\begin{bmatrix}\Lambda &\mathbf{0}\\ \mathbf{0} &I\end{bmatrix}^\frac{1}{2}W'\\
&=W\begin{bmatrix}\Lambda^\frac{1}{2}\breve W &\mathbf{0}\\ \mathbf{0} &I\end{bmatrix}\begin{bmatrix}I+\begin{bmatrix}a\breve\Lambda &\mathbf{0}\\ \mathbf{0} &\mathbf{0}\end{bmatrix} &\mathbf{0}\\ \mathbf{0} &\mathbf{0}\end{bmatrix}\begin{bmatrix}\breve W'\Lambda^\frac{1}{2} &\mathbf{0}\\ \mathbf{0} &I\end{bmatrix}W',
\end{split}
\end{equation}
from which it follows that there exists a constant $a^*$ such that
\begin{equation}\label{eq:RankReduction}
\begin{split}
{\rm{rank}}(U_2(a^*))&={\rm{rank}}\left(I+\begin{bmatrix}a^*\breve\Lambda &\mathbf{0}\\ \mathbf{0} &\mathbf{0}\end{bmatrix}\right)\\
&<{\rm{rank}}(U_1^*)
\end{split}
\end{equation}
and $I+\begin{bmatrix}a^*\breve\Lambda &\mathbf{0}\\ \mathbf{0} &\mathbf{0}\end{bmatrix}\ge 0$, which implies that
\begin{equation}\label{eq:PositiveSemidefinite}
U_2(a^*)\ge 0.
\end{equation}
Moreover, we have, from \eqref{eq:Null}, that
\begin{equation}\nonumber
\begin{split}
{\rm{Tr}}(P_\ell U_2(a^*))&={\rm{Tr}}(P_\ell U_1^*),\ \ell=\ell^*\\
{\rm{Tr}}(R_k U_2(a^*))&={\rm{Tr}}(R_k U_1^*),\ k=1,\cdots,K\\
{\rm{Tr}}(e_{T+1}e_{T+1}{}'U_2(a^*))&={\rm{Tr}}(e_{T+1}e_{T+1}{}' U_1^*),
\end{split}
\end{equation}
from which, together with \eqref{eq:RankReduction} and \eqref{eq:PositiveSemidefinite}, it follows that $U_2(a^*)$ is also an optimal solution and its rank is less than the rank of $U_1^*$.

From $U_2(a^*)$, we repeat the same procedure to obtain another matrix of a smaller rank. We repeat this rank reduction procedure until we obtain a matrix satisfying the condition in \eqref{eq:RankBound}.

\subsection{Proof of Lemma \ref{lem:InputModel}}

Due to the relaxation of the rank constraint, we have $\hat Z(U^*)\le Z(\mathbf{u}^*)$.

For a given $\xi$, we search for an appropriate constant $a\ge 0$ such that $aQ^*\xi+q^*$ is a feasible solution to the optimization problem in \eqref{eq:InputModel}. Such a constant exists with probability $1$ and we denote the constant by $a^*$. Then, it is clear that, for any given constants $\beta_1,\beta_2>0$,
\begin{equation}\nonumber
\begin{split}
&P\left[Z(a^*Q^*\xi+q^*)\le\beta_1\beta_2E\left[Z(Q^*\xi)^2\right]^\frac{1}{2}+Z(q^*)\right]\\
&\ge P\left[a^*Z(Q^*\xi)\le\beta_1\beta_2E\left[Z(Q^*\xi)^2\right]^\frac{1}{2}\right]\\
&\ge P\left[a^*\le\beta_1\ {\rm{and}}\ Z(Q^*\xi)\le\beta_2E\left[Z(Q^*\xi)^2\right]^\frac{1}{2}\right]\\
&=1-P\left[a^*>\beta_1\ {\rm{or}}\ Z(Q^*\xi)>\beta_2E\left[Z(Q^*\xi)^2\right]^\frac{1}{2}\right]\\
&\ge 1-P\left[a^*>\beta_1\right]-P\left[Z(Q^*\xi)^2>\beta_2^2E\left[Z(Q^*\xi)^2\right]\right],
\end{split}
\end{equation}
from which, together with the Markov's inequality
\begin{equation}\nonumber
P\Big[Z(Q^*\xi)^2>\beta_2^2E\left[Z(Q^*\xi)^2\right]\Big]\le\frac{1}{\beta_2^2}
\end{equation}
and Proposition \ref{prop:alpha} below, it follows that
\begin{equation}\label{eq:lem1}
\begin{split}
&P\left[Z(a^*Q^*\xi+q^*)\le\beta_1\beta_2E\left[Z(Q^*\xi)^2\right]^\frac{1}{2}+Z(q^*)\right]\\
&\ge 1-M\max\left\{\frac{1}{\beta_1\sqrt{\rho}},\frac{2\left({\rm{rank}}(U^*)-1\right)}{\left(\pi-2\right)\beta_1^2\rho}\right\}-\frac{1}{\beta_2^2}.
\end{split}
\end{equation}

It follows, from Proposition \ref{prop:RankReductionL1} and \ref{prop:RankReduction}, that ${\rm{rank}}(U^*)\le\sqrt{2(M+1)}$ and, hence, for
\begin{equation}\nonumber
\beta_1=\frac{2(M+1)}{\sqrt{\pi\rho}}\left(1-\frac{1}{\beta_2^2}\right)^{-1}{\rm{and}}\ \beta_2=\sqrt{3},
\end{equation}
we have
\begin{equation}\nonumber
\frac{1}{\beta_1\sqrt{\rho}}\ge\frac{2\left({\rm{rank}}(U^*)-1\right)}{\left(\pi-2\right)\beta_1^2\rho},
\end{equation}
from which, together with \eqref{eq:lem1}, it follows that
\begin{equation}\nonumber
\begin{split}
&P\left[Z(a^*Q^*\xi+q^*)\le\beta_1\beta_2E\left[Z(Q^*\xi)^2\right]^\frac{1}{2}+Z(q^*)\right]\\
&\ge 1-\frac{\sqrt{\pi}}{3}-\frac{1}{3}>0.
\end{split}
\end{equation}
This implies that there exists a vector $\hat\xi\in\mathbb{R}^T$ satisfying
\begin{equation}\nonumber
\begin{split}
Z(\mathbf{u}^*)&\le Z(a^*Q^*\hat\xi+q^*)\\
&\le\sqrt{\frac{27}{\pi\rho}}(M+1)E\left[Z(Q^*\xi)^2\right]^\frac{1}{2}+Z(q^*),
\end{split}
\end{equation}
which completes the proof.

\begin{prop}\label{prop:alpha}
For any $\beta_1>0$, the random variable $a^*$ in the proof of Lemma \ref{lem:InputModel} satisfies
\begin{equation}\nonumber
P\left[a^*>\beta_1\right]\le M\max\left\{\frac{1}{\beta_1\sqrt{\rho}},\frac{2\left({\rm{rank}}(U^*)-1\right)}{\left(\pi-2\right)\beta_1^2\rho}\right\}
\end{equation}
with the constant $\rho$ in Lemma \ref{lem:InputModel}.
\end{prop}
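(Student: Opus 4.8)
\emph{Proof strategy.} The plan is to fix one draw of $\xi$, localize the event $\{a^*>\beta_1\}$ to a union over the $M$ model pairs of Gaussian small-ball events, and bound each of those by the two-term expression in the statement. First I would recall that in the proof of Lemma~\ref{lem:InputModel} the constant $a^*$ is only required to be a nonnegative scaling making $a^*Q^*\xi+q^*$ feasible for \eqref{eq:InputModel}; take it to be the smallest such scaling. If $a^*>\beta_1$, then no $a\in[0,\beta_1]$ makes $aQ^*\xi+q^*$ feasible, so in particular $\beta_1Q^*\xi+q^*$ violates at least one of the discrimination constraints. Writing $\zeta_m:=\tfrac{1}{\overline\gamma_m}\overline G_mQ^*\xi$, a zero-mean Gaussian vector, and $b_m:=\tfrac{1}{\overline\gamma_m}(\overline G_mq^*+\overline\eta_m)$, this shows $\{a^*>\beta_1\}\subseteq\bigcup_{m=1}^{M}\{\,|\beta_1\zeta_m+b_m|<1\,\}$, hence $P[a^*>\beta_1]\le\sum_{m=1}^{M}P[\,|\beta_1\zeta_m+b_m|<1\,]$ by the union bound.

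Next I would strip the affine shift and diagonalize. Since each $\zeta_m$ is centered Gaussian and $\{x:|x|<1/\beta_1\}$ is convex and symmetric, Anderson's inequality gives $P[\,|\beta_1\zeta_m+b_m|<1\,]\le P[\,|\zeta_m|<1/\beta_1\,]$ (when $\overline\eta_m=\mathbf 0$, Proposition~\ref{prop:rho} makes $q^*=\mathbf 0$ and this step is vacuous). Diagonalizing the covariance of $\zeta_m$ yields $|\zeta_m|^2\overset{d}{=}\sum_{i=1}^{r_m}\lambda_{m,i}^2 h_i^2$ with independent standard normals $h_i$, where the $\lambda_{m,i}$ are the nonzero singular values of $\tfrac{1}{\overline\gamma_m}\overline G_mQ^*$; thus $\sum_i\lambda_{m,i}^2\ge\rho$ by definition of $\rho$, and $r_m=\rank(\overline G_mQ^*)\le\rank(Q^*)$. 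Because $U^*=VV'$ with $V=\big[\begin{smallmatrix}Q^*&q^*\\ \mathbf 0&1\end{smallmatrix}\big]$, we have $\rank(U^*)=\rank(V)=\rank(Q^*)+1$, so $r_m\le\rank(U^*)-1$, exactly the rank appearing in the claimed bound.

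The crux is then a scalar small-ball estimate for a Gaussian quadratic form: for $\nu_1,\dots,\nu_r>0$ with $\tau:=\sum_i\nu_i$ and independent standard normals $h_i$,
\[
P\!\left[\textstyle\sum_{i=1}^{r}\nu_i h_i^2\le\delta\right]\;\le\;\max\!\left\{\sqrt{\delta/\tau},\ \tfrac{2r\delta}{(\pi-2)\tau}\right\}.
\]
By independence, $P[\sum_i\nu_ih_i^2\le\delta]\le\prod_iP[\,|h_i|\le\sqrt{\delta/\nu_i}\,]\le\prod_i\min\{1,\sqrt{2/\pi}\,\sqrt{\delta/\nu_i}\}$, using $P[|h|\le x]\le\sqrt{2/\pi}\,x$; one then splits on whether $\delta$ is small relative to $\max_i\nu_i$. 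In the small-$\delta$ regime, either most of the mass $\tau$ lies on one weight, in which case the single factor $\sqrt{2/\pi}\,\sqrt{\delta/\nu_1}$ already gives essentially $\sqrt{\delta/\tau}$, or the mass is spread over several comparably small weights whose factors compound; either way the product is at most $\sqrt{\delta/\tau}$ up to a constant $\le 1$. Otherwise the product is crudely $\le 1$, which is dominated by $\tfrac{2r\delta}{(\pi-2)\tau}$ in that range of $\delta$. This is the direct extension of the estimate underlying Theorem~$1$ of \cite{Luo:2007}. Nailing the exact constants in this case split, rather than merely obtaining \emph{some} polynomial small-ball bound, is the step I expect to be most delicate; the reductions in the preceding paragraphs are routine once Anderson's inequality and the rank identity for $U^*$ are in hand.

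Finally I would assemble the pieces: applying the scalar estimate with $\delta=1/\beta_1^2$, $\nu_i=\lambda_{m,i}^2$ and $r=r_m$, and using $\tau\ge\rho$ together with $r_m\le\rank(U^*)-1$, each summand is at most $\max\{\tfrac{1}{\beta_1\sqrt\rho},\,\tfrac{2(\rank(U^*)-1)}{(\pi-2)\beta_1^2\rho}\}$; summing over the $M$ model pairs gives the stated inequality.
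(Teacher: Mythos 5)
Your proposal follows essentially the same route as the paper's proof: a union bound over the $M$ discrimination constraints, stripping the affine shift (the paper does this step without naming Anderson's inequality), an SVD reduction to a diagonal Gaussian quadratic form with at most $\rank(U^*)-1$ nonzero weights, and the two-term small-ball estimate with total mass $\sum_{i}\lambda_{m,i}^2\ge\rho$. The only organizational difference is in the scalar lemma, where the paper's case split is purely on whether $\lambda_{m,1}^2\ge\frac{2}{\pi}\sum_i\lambda_{m,i}^2$ (one Gaussian factor if so; two factors bounded via $\lambda_{m,2}^2>\frac{1-2/\pi}{\rank(U^*)-1}\sum_i\lambda_{m,i}^2$ if not, which is where the constant $\pi-2$ arises) rather than on the size of $\delta=1/\beta_1^2$, but this yields exactly the bound you state.
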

\begin{proof}
For any $\beta_1>0$, it follows, from the definition of $a^*$ in the proof of Lemma \ref{lem:InputModel}, that
\begin{equation}\label{eq:prop1}
\begin{split}
&P\left[a^*>\beta_1\right]\\
&=P\bigg[\exists\ m\in\{1,\cdots,M\}\ {\rm{s.t.}}\bigg.\\
&\hspace{9mm}\left.\frac{1}{\overline\gamma_m}\left|\overline G_m\left(\beta_1Q^*\xi+q^*\right)+\overline\eta_m\right|<1\right]\\
&\le\sum_{m=1}^MP\left[\left|\beta_1\overline G_mQ^*\xi+\overline G_mq^*+\overline\eta_m\right|<\overline\gamma_m\right].
\end{split}
\end{equation}
Using the eigenvalue decomposition, we can find two unitary matrices $V_m$ and $W_m$ and a diagonal matrix $\Lambda_m={\rm{diag}}\{\lambda_{m,1},\cdots,\lambda_{m,T}\}$ with $\lambda_{m,1}\ge\cdots\ge\lambda_{m,T}\ge 0$ satisfying $\overline G_mQ^*=V_m'\Lambda_mW_m$, which leads to
\begin{equation}\label{eq:prop2}
\begin{split}
&P\left[\left|\beta_1\overline G_mQ^*\xi+\overline G_mq^*+\overline\eta_m\right|<1\right]\\
&=P\left[\left|\beta_1\Lambda_m\tilde\xi_m+V_m\left(\overline G_mq^*+\overline\eta_m\right)\right|<1\right]\\
&\le P\left[\left|\beta_1\Lambda_m\tilde\xi_m\right|<1\right]
\end{split}
\end{equation}
where $\tilde\xi_m=\begin{bmatrix}\tilde\xi_{m,1} &\cdots &\tilde\xi_{m,T}\end{bmatrix}'=W_m\xi$. Note that $\tilde\xi_m\in\mathbb{R}^T$ is also a random vector with the standard normal distribution due to the unitary property of $W_m$.

For a given constant $\theta\in (0,1)$, if $\lambda_{m,1}^2\ge\theta\sum_{i=1}^T\lambda_{m,i}^2$, we have
\begin{equation}\label{eq:prop3}
\begin{split}
P\left[\left|\beta_1\Lambda_m\tilde\xi_m\right|<1\right]&\le P\left[\beta_1\lambda_{m,1}|\tilde\xi_{m,1}|<1\right]\\
&\le P\left[|\tilde\xi_{m,1}|<\frac{1}{\beta_1\sqrt{\theta\sum_{i=1}^T\lambda_{m,i}^2}}\right]\\
&\le\sqrt{\frac{2}{\pi\beta_1^2\theta\sum_{i=1}^T\lambda_{m,i}^2}}
\end{split}
\end{equation}
where the last inequality comes from the fact that $\tilde\xi_{m,1}$ is a random variable with a standard normal distribution.

On the other hand, if $\lambda_{m,1}^2<\theta\sum_{i=1}^T\lambda_{m,i}^2$, we have
\begin{equation}\nonumber
\left({\rm{rank}}(U^*)-1\right)\lambda_{m,2}^2\ge\sum_{i=1}^T\lambda_{m,i}^2-\lambda_{m,1}^2>(1-\theta)\sum_{i=1}^T\lambda_{m,i}^2
\end{equation}
and, hence,
\begin{equation}\nonumber
\lambda_{m,1}^2\ge\lambda_{m,2}^2>\frac{1-\theta}{{\rm{rank}}(U^*)-1}\sum_{i=1}^T\lambda_{m,i}^2,
\end{equation}
which leads to
\begin{equation}\label{eq:prop4}
\begin{split}
&P\left[\left|\beta_1\Lambda_m\tilde\xi_m\right|<1\right]\\
&\le P\left[\beta_1\lambda_{m,1}\tilde\xi_{m,1}<1\ {\rm{and}}\ \beta_1\lambda_{m,2}\tilde\xi_{m,2}<1\right]\\
&\le P\left[\beta_1\lambda_{m,1}|\tilde\xi_{m,1}|<1\right]P\left[\beta_1\lambda_{m,2}|\tilde\xi_{m,2}|<1\right]\\
&\le\frac{2\left({\rm{rank}}(U^*)-1\right)}{\pi\beta_1^2\left(1-\theta\right)\sum_{i=1}^T\lambda_{m,i}^2}
\end{split}
\end{equation}
where the last inequality comes from the fact that $\tilde\xi_{m,1}$ and $\tilde\xi_{m,2}$ are random variables with standard normal distributions.

We pick $\theta=\frac{2}{\pi}$. Then, we have, from \eqref{eq:prop3} and \eqref{eq:prop4},
\begin{equation}\label{eq:prop5}
\begin{split}
&P\left[\left|\beta_1\Lambda_m\tilde\xi_m\right|<1\right]\\
&\le\max\left\{\frac{1}{\sqrt{\beta_1^2\sum_{i=1}^T\lambda_{m,i}^2}},\frac{2\left({\rm{rank}}(U^*)-1\right)}{\left(\pi-2\right)\beta_1^2\sum_{i=1}^T\lambda_{m,i}^2}\right\}\\
&\le\max\left\{\frac{1}{\beta_1\sqrt{\rho}},\frac{2\left({\rm{rank}}(U^*)-1\right)}{\left(\pi-2\right)\beta_1^2\rho}\right\}
\end{split}
\end{equation}
with the constant $\rho$ in Lemma \ref{lem:InputModel}. The proof is completed by combining \eqref{eq:prop1}, \eqref{eq:prop2}, and \eqref{eq:prop5}.
\end{proof}

\subsection{Proof of Proposition \ref{prop:rho}}

It is clear that
\begin{equation}\label{eq:InputModelContraint}
\begin{split}
&\frac{1}{\overline\gamma_m^2}{\rm{Tr}}\left(\begin{bmatrix}\overline G_m{}'\\ \overline\eta_m{}'\end{bmatrix}\begin{bmatrix}\overline G_m &\overline\eta_m\end{bmatrix}U^*\right)\\
&=\frac{1}{\overline\gamma_m^2}\left({\rm{Tr}}\left(\overline G_mQ^*(\overline G_mQ^*)'\right)+\left|\overline G_mq^*+\overline\eta_m\right|^2\right)\\
&=\sum_{i=1}^T\lambda_{m,i}^2+\frac{1}{\overline\gamma_m^2}\left|\overline G_mq^*+\overline\eta_m\right|^2\\
&\ge 1
\end{split}
\end{equation}
for $m=1,\cdots,M$ where the last inequality comes from the fact that $U^*$ satisfies the constraints in \eqref{eq:InputModelSDP}. This leads to
\begin{equation}\nonumber
\begin{split}
\rho&=\min_{m\in\{1,\cdots,M\}}\sum_{i=1}^T\lambda_{m,i}^2\\
&\ge 1-\max_{m\in\{1,\cdots,M\}}\frac{1}{\overline\gamma_m^2}\left|\overline G_mq^*+\overline\eta_m\right|^2.
\end{split}
\end{equation}


\bibliographystyle{IEEEtran}
\bibliography{IEEEabrv,ACC2014ModelDiscrimination}

\begin{thebibliography}{10}
\providecommand{\url}[1]{#1}
\csname url@rmstyle\endcsname
\providecommand{\newblock}{\relax}
\providecommand{\bibinfo}[2]{#2}
\providecommand\BIBentrySTDinterwordspacing{\spaceskip=0pt\relax}
\providecommand\BIBentryALTinterwordstretchfactor{4}
\providecommand\BIBentryALTinterwordspacing{\spaceskip=\fontdimen2\font plus
\BIBentryALTinterwordstretchfactor\fontdimen3\font minus
  \fontdimen4\font\relax}
\providecommand\BIBforeignlanguage[2]{{%
\expandafter\ifx\csname l@#1\endcsname\relax
\typeout{** WARNING: IEEEtran.bst: No hyphenation pattern has been}%
\typeout{** loaded for the language `#1'. Using the pattern for}%
\typeout{** the default language instead.}%
\else
\language=\csname l@#1\endcsname
\fi
#2}}

\bibitem{Goodwin:1977}
G.~C. Goodwin and R.~L. Payne, ``Dynamic system identification: experiment
  design and data analysis,'' 1977.

\bibitem{Ljung:1999}
L.~Ljung, \emph{System Identification: Theory for the User}, 2nd~ed.\hskip 1em
  plus 0.5em minus 0.4em\relax Upper Saddle River, NJ: Prentice Hall, 1999.

\bibitem{Jansson:2005}
H.~Jansson and H.~Hjalmarsson, ``Input design via {LMI}s admitting
  frequency-wise model specifications in confidence regions,'' \emph{IEEE
  Transactions on Automatic Control}, vol.~50, no.~10, pp. 1534--1549, Oct.
  2005.

\bibitem{Rojas:2007}
C.~R. Rojas, J.~S. Welsh, G.~C. Goodwin, and A.~Feuer, ``Robust optimal
  experiment design for system identification,'' \emph{Automatica}, vol.~43,
  no.~6, pp. 993--1008, 2007.

\bibitem{Manchester:2010}
I.~R. Manchester, ``Input design for system identification via convex
  relaxation,'' in \emph{the 49th IEEE Conference on Decision and Control
  (CDC)}, Atlanta, GA, USA, 2010.

\bibitem{Manchester:2012}
------, ``Amplitude-constrained input design: Convex relaxation and application
  to clinical neurology,'' in \emph{SYSID2012: IFAC Symposium on System
  Identification}, Brussels, Belgium, 2012.

\bibitem{Hwang:2010}
I.~Hwang, S.~Kim, Y.~Kim, and C.~E. Seah, ``A survey of fault detection,
  isolation, and reconfiguration methods,'' \emph{IEEE Transactions on Control
  Systems Technology}, vol.~18, no.~3, pp. 636--653, 2010.

\bibitem{Rojas:2008}
C.~R. Rojas, J.~C. Ag{\"u}ero, J.~S. Welsh, and G.~C. Goodwin, ``On the
  equivalence of least costly and traditional experiment design for control,''
  \emph{Automatica}, vol.~44, no.~11, pp. 2706--2715, 2008.

\bibitem{Kerestecioglu:1994}
F.~Kerestecio{\v g}lu and M.~B. Zarrop, ``Input design for detection of abrupt
  changes in dynamical systems,'' \emph{International Journal of Control},
  vol.~59, no.~4, pp. 1063--1084, 1994.

\bibitem{Campbell:2002}
S.~L. Campbell, K.~G. Horton, and R.~Nikoukhah, ``Auxiliary signal design for
  rapid multi-model identification using optimization,'' \emph{Automatica},
  vol.~38, no.~8, pp. 1313--1325, Aug. 2002.

\bibitem{Skanda:2010}
D.~Skanda and D.~Lebiedz, ``An optimal experimental design approach to model
  discrimination in dynamic biochemical systems,'' \emph{Bioinformatics},
  vol.~26, no.~7, pp. 939--945, 2010.

\bibitem{Kullback:1951}
S.~Kullback and R.~A. Leibler, ``On information and sufficiency,'' \emph{The
  Annals of Mathematical Statistics}, vol.~22, no.~1, pp. 79--86, 1951.

\bibitem{Luo:2007}
Z.~Luo, N.~D. Sidiropoulos, P.~Tseng, and S.~Zhang, ``Approximation bounds for
  quadratic optimization with homogeneous quadratic constraints,''
  \emph{Society for Industrial and Applied Mathematics (SIAM) Journal on
  Optimization}, vol.~18, no.~1, pp. 1--28, 2007.

\bibitem{Nesterov:1998}
Y.~Nesterov, ``Semidefinite relaxation and nonconvex quadratic optimization,''
  \emph{Optimization Methods and Software}, vol.~9, no. 1-3, pp. 141--160,
  1998.

\bibitem{Safonov:1997}
M.~G. Safonov and T.~C. Tsao, ``The unfalsified control concept and learning,''
  \emph{IEEE Transactions on Automatic Control}, vol.~42, no.~6, pp. 843--847,
  June 1997.

\bibitem{Cheong:2010}
S.~Cheong, ``Safe adaptive switching control with no {SCLI} assumption,'' in
  \emph{American Control Conference (ACC)}, Baltimore, MD, USA, 2010.

\bibitem{Sturm:1999}
J.~F. Sturm, ``Using sedumi 1.02, a matlab toolbox for optimization over
  symmetric cones,'' \emph{Optimization methods and software}, vol.~11, no.
  1-4, pp. 625--653, 1999.

\bibitem{lofberg2004yalmip}
J.~Lofberg, ``Yalmip: A toolbox for modeling and optimization in matlab,'' in
  \emph{Computer Aided Control Systems Design, 2004 IEEE International
  Symposium on}.\hskip 1em plus 0.5em minus 0.4em\relax IEEE, 2004, pp.
  284--289.

\bibitem{CVX:2012}
M.~Grant and S.~Boyd, ``{CVX}: Matlab software for disciplined convex
  programming, version 2.0 beta,'' http://cvxr.com/cvx, Sept. 2012.

\bibitem{Polyak:1998}
B.~T. Polyak, ``Convexity of quadratic transformations and its use in control
  and optimization,'' \emph{Journal of Optimization Theory and Applications},
  vol.~99, no.~3, pp. 553--583, Dec. 1999.

\bibitem{boyd2004convex}
S.~P. Boyd and L.~Vandenberghe, \emph{Convex optimization}.\hskip 1em plus
  0.5em minus 0.4em\relax Cambridge university press, 2004.

\bibitem{Pataki:1998}
G.~Pataki, ``On the rank of extreme matrices in semidefinite programs and the
  multiplicity of optimal eigenvalues,'' \emph{Mathematics of Operations
  Research}, vol.~23, no.~2, pp. 339--358, May 1998.

\bibitem{Barvinok:1995}
A.~I. Barvinok, ``Problems of distance geometry and convex properties of
  quadratic maps,'' \emph{Discrete \& Computational Geometry}, vol.~13, no.~1,
  pp. 189--202, 1995.

\bibitem{YeLecture}
\BIBentryALTinterwordspacing
Y.~Ye. {MS}\&{E} 314 conic linear optimization, lecture notes on rank
  reduction. [Online]. Available:
  \url{http://www.stanford.edu/class/msande314/lecture07-1.pdf}
\BIBentrySTDinterwordspacing

\bibitem{Odgaard:2013}
P.~F. Odgaard and K.~E. Johnson, ``Wind turbine fault detection and fault
  tolerant control - an enhanced benchmark challenge,'' in \emph{American
  Control Conference (ACC)}, Washington, DC, USA, 2013.

\end{thebibliography}

\end{document}